\theoremstyle{plain}
\newtheorem{theorem}{Theorem}[section]
\newtheorem{lemma}[theorem]{Lemma}
\theoremstyle{definition}
\newtheorem{definition}[theorem]{Definition}
\newtheorem{remark}[theorem]{Remark}
\newtheorem{assumption}[theorem]{Assumption}
\newcommand{\B}{{\mathbb{B}}}
\newcommand{\E}{{\mathbb{E}}}
\renewcommand{\P}{{\mathbb{P}}}
\newcommand{\R}{{\mathbb{R}}}
\DeclareRobustCommand*\cal{\@fontswitch\relax\mathcal}
\renewcommand{\P}{{\mathbb P}}
\newcommand{\cA}{{\cal A}}
\newcommand{\cG}{{\cal G}}
\newcommand{\cP}{{\cal P}}
\newcommand{\cR}{{\cal R}}
\newcommand{\be}{\begin{equation}}
\newcommand{\ee}{\end{equation}}
\newcommand{\bea}{\begin{eqnarray}}
\newcommand{\eea}{\end{eqnarray}}
\newcommand{\beast}{\begin{eqnarray*}}
\newcommand{\eeast}{\end{eqnarray*}}
\newcommand{\bproof}{\begin{proof}}
\newcommand{\eproof}{\end{proof}}
\title[Pricing of derivatives written on industrial loss indexes]{Utility indifference pricing of derivatives written on industrial loss indexes}
\thanks{The first author is partly supported through the Austrian Science Fund
project P21196. The second author is supported by the Austrian Exchange Service
under the North-South-Dialogue Scholarship programme.}
\author{Gunther Leobacher and  Philip Ngare}
\begin{document} 

\bibliographystyle{alpha}

\maketitle

\begin{abstract}  We consider the problem of pricing derivatives 
written on some industrial loss index via utility
indifference pricing. The industrial loss index is modeled by a compound
Poisson process and the insurer can adjust her portfolio by choosing the risk
loading, which in turn determines the demand. We compute the price of a CAT
(spread) option written on that index using utility indifference pricing.\\
\end{abstract}

{\noindent \bf Keywords:} {(Re-)Insurance, catastrophe derivatives, jump process, random thinning, utility  indifference price\\[-0.5em]}

{\noindent \bf MSC2010:}  {91B16, 91G20, 93E20, 60J75 \\[-0.5em]}

{\noindent\bf JEL classification:}{ G13, G22 }


\section{Introduction}\label{sec:introduction}


It was recognized shortly after the Hurricane Andrew in 1992, then the most
costly natural catastrophe in history, that 
events of this magnitude significantly stress the capacity of the insurance
industry. On the other hand, the accumulated losses of those events are rather
small relative to the US stock and bond markets. Thus, securitization offers
a potentially more efficient mechanism for financing CAT losses than
conventional insurance and reinsurance, see Cummins et al, \cite{clp}.\\

The first contracts were launched by the Chicago Board of Trade (CBOT), which
introduced catastrophe futures in 1992 and later introduced catastrophe put and
call options.  The options were based on aggregate catastrophe-loss indices
compiled by Property Claims Services, an insurance industry statistical agent,
see \cite{cum}. 

In the absence of a traded underlying asset, insurance-linked securities have
been structured to pay-off on three types of variables: Insurance-industry
catastrophe loss indices, insurer-specific catastrophe losses, and parametric
indices based on the physical characteristics of catastrophic events. 
The first variant involves higher basis risk and less exposure to moral
hazard than the second, the third variant tries to balance the two risks
in a suitable way, cf. Cummins \cite{cum}. In this paper we solely concentrate on 
index-based derivatives.\\

A simple example of such a derivative is provided by the aforementioned
call options on an insurance-industry catastrophe loss index. The variant 
introduced by CBOT was actually a call option spread, that is, a combination
of a call option long and another call option short with a higher strike. 

A more popular type of catastrophe derivative is the CAT bond. This is a 
classical bond in which there is an option embedded which is triggered
by a defined catastrophic event. In this paper we will again only
consider those bonds where this catastrophic event depends on some 
industry-loss index, though in practice both of the other variants are
of importance as well. From our point of view there is little difference
between CAT bond and CAT option, since on evaluating a CAT bond we 
concentrate on the embedded option. There is however some danger of confusion
regarding the role of buyer/seller with CAT bonds: The issuer of the bond
actually buys the embedded option while the buyer of the bond sells the
option. \\ 

For the issuer of a CAT bond -- typically
an insurance or reinsurance company -- it serves as a reinsurance. 
On the other hand, the investor who buys the bond (and therefore sells
an option) receives 
a coupon over the market interest and can, at the same time, diversify her
risk by investing in a security whose payoff is largely uncorrelated with
classical financial instruments. \\
 
Geman and Yor \cite{Geman} analyze catastrophe
options with payoff $(C(T)-K)^+$ where $C$ is the aggregate claims process 
which is
modeled by a jump-diffusion process. Cox \cite{Cox} used a  pure Poisson
process to model the aggregate loss of an insurance company, and derived the
pricing formula of CATEputs under the assumptions of constant arrival rates of
catastrophic events. Jaimungal and Wang \cite{Jaimungal} used a
compound Poisson process to describe the dynamic losses more accurately, but
maintain the assumption of the constant arrival rate of claims.\\

We  model the arrival of
claims, which are accounted for in some industrial loss index, as a  Poisson
process with fixed arrival intensity.  The underlying of the CAT derivative,
the index, is itself not tradable.  It therefore makes sense to use the method
of indifference pricing via expected utility of Hodges and Neuberger \cite{Hodges} to price the
derivative. A similar approach can be found in Egami and Young  \cite{Egami}, where the
authors used utility indifference pricing techniques to price structured
catastrophe bonds.  However, there is a big difference in our modeling of the
hedging opportunity.  In our setup this is done via adjusting the insured
portfolio.\\

For catastrophic events, the assumption that the resulting claims
occur at jump times of a Poisson process as adopted by most previous studies is
not beyond justifiable critique. 
Therefore 
alternative point processes have  been used to generate the claim arrival
process. Lin et al \cite{Lin} proposed a doubly stochastic Poisson
process, (also called ``Cox process'', see 
\cite{Cox,Grandell1,Grandell2,Bremaud,Lando}) 
to model the arrival process for catastrophic events
and derived pricing formulas of contingent capital. See also Fuita et al \cite{fuji}
for arbitrage pricing of  CAT bonds in such a context. 
Jaimungal and Chong \cite{Jaimungal2} consider valuation of catastrophe 
derivatives when the rate of the claims is modulated by a Markov chain.
\\

Charpentier \cite{charpentier} considers hedging of catastrophe derivatives
with stocks whose jumps depend on catastrophic events and how to compute
a utility indifference price in this setup. 

Our study contributes to the literature  by presenting a new approach
to hedging a CAT derivative via adjustment of the insured portfolio, which
in turn is done via adjusting the risk loading and an exogenously given
demand curve. The main idea is that the
loss in the portfolio of a single insurance company is necessarily
correlated with an industrial loss index that includes the losses
of that insurance. The introduction of the derivative has therefore an influence
on the pricing policy of the insurance company.

It has been noted by  Cummins \cite{cum} that the relatively low volume in the
CAT derivatives market may in part be due to insufficient understanding
of how these products may be hedged. Our paper gives a new perspective
to the hedging of CAT derivatives via the most basic operation of an insurance
company, i.e. the choice of a suitable risk loading for a particular risk.
Future work may combine this approach with other hedging methods, like
trading in shares that are correlated with catastrophic events, such as 
those of construction companies.\\

The paper is organized as follows: In Section \ref{sec:model-setup} we give
the problem description: We assume a global claims process $C$, which keeps
track of all claims due to a specific event in a given country and we consider
an insurance company in the same country, so that the index will contain  the
losses of that particular insurance company among others.  The insurance
company is  facing a certain demand curve which determines the fraction of the
insurance market that the company gets to insure, dependent on the risk loading
it charges.  We therefore have to model the $\xi$-fraction of the claims
process $C$, for an insurance company  with a $\xi$-fraction of the market.
Such a model is constructed in Section \ref{sec:model-setup} where  we also
derive the wealth process for the insurance company. We conclude that section 
by giving a short introduction into the concept of utility indifference
pricing.

Section \ref{sec:computation} constitutes the main part of our paper: 
We  derive  a suitable Hamilton-Jacobi-Bellman (HJB) equation for an insurance 
that holds
$k$ units of a derivative written on the total number of claims
(Subsection \ref{sec:optimal-strategy}). We use the concept of 
{\em piecewise deterministic Markov decision process} as presented,
e.g., by B\"auerle and Rieder in \cite{BR2010,BR2011}. In particular,
we will make use of a verification theorem from \cite{BR2011} to show
that a solution to the HJB equation also solves the optimal control problem.
At that stage we will have to specialize to exponential utility.  

Two subsections, \ref{sec:linear-demand} and \ref{sec:non-linear_demand},
are devoted to special demand functions. While Subsection 
\ref{sec:linear-demand} looks into the details of the very special case
of linear demand, Subsection  \ref{sec:non-linear_demand} presents a 
class of demands that are more general than the previous linear one, but 
still preserve the property of leading to a unique optimal risk loading.
By the end of Section \ref{sec:pricing}, in Subsection
\ref{sec:numerical-example}, we give a numerical example for linear demand.
In the technical Subsection \ref{sec:verification} we show that the conditions 
of the verification theorem are satisfied. 

Section \ref{sec:limit-prices} is devoted to the question under which conditions
the derivative could actually be sold, that is, when the buyer's price is at
least as big as the seller's price. To that end we study a couple of 
different pricing concepts related to the utility indifference price.

\section{Model setup}
\label{sec:model-setup}

\subsection{Study problem}

Suppose we have a global claims process $C=(C_t)_{t\ge 0}$, which keeps track
of all insurance claims due to a specific type of event in a country. 
That is, $C_t$ is the cumulative sum of all claims up to time $t$.
We assume that there are
$M$ possible clients in the market which potentially contribute to the claims
process. If all those clients had insurance 
contracts with the same insurance company, then $C$ were the claims 
process of this insurance company. Let $a$ be the ``fair'' annual premium 
for one client, that is $\E(C_1)=M\cdot a$.
The annual premium for one contract
therefore has to be
greater or equal than $a$, since otherwise the insurance will make an
almost sure loss in the long run. 

Assume that an insurance company faces a demand curve $q$ such that if 
the premium the insurance charges for the claim is $a(1+\theta)$, then 
the company gets to insure the $\frac{q(\theta)}{M}$-th part of the whole claim 
process for the total annual premium $a(1+\theta) q(\theta)$, where $M$ 
is the total number of clients. 
It is assumed that
$q$ is continuous and strictly decreasing in $\theta$. 
We further make the reasonable assumption that the insurance gets 
to insure the whole 
process if it does not charge any risk loading (any strictly risk-averse
client would enter such a contract) and it gets $0$ contracts if
the risk-loading exceeds some fixed number $m>0$. In our model $\theta$
may vary over time and we assume that the number of contracts is adjusted
instantaneously via the demand function $q$. This is a simplifying assumption
that will not be met in practice. However we will see in our numerical
examples that there are large areas where the risk loading is almost constant
over time and the index value. Thus our analysis can either be viewed as 
a first order approximation to more realistic models or as a model that has
practical relevance only under certain market conditions.

Note that in the above setting the insurance company is 
not necessarily a monopolist: $q$ is the demand that the company faces,
which may well be influenced by other insurance companies' decisions.
We only assume that competing firms respond consistently to changes 
in the risk loading, so that $q$ does not change over time and is not 
influenced by earlier policy decisions.
We will ignore the influence of the derivative that we want to price
 on demand, i.e. we 
assume that the demand the insurance company faces on its insurance contracts
is the same regardless
of whether the insurance company introduces the derivative or not.

One question arising here is how we can
model the $\xi$-th part of the industry loss process in a way that 
an insurance company which holds contracts for the $\xi$-th part of 
the market will only be confronted with the $\xi$-th part of
the claims? For fixed $\xi$ this will be a thinning of the original process.
Once we have found a model for this, we find that the wealth process of
the insurance company can 
be controlled via the risk loading process and therefore we can ask for
optimal strategies for maximizing terminal utility.
This in turn will make it possible to use the method of utility indifference
pricing for CAT-derivatives.

Since the wealth process is obviously correlated to the global claims process
$C$, any derivative written on $C_T$, for some fixed $T>0$, can be partially
hedged. This will result in a utility indifference price that is 
different from the utility equivalence price.



\subsection{Exposure to industry loss}

We assume as given a Poisson process $N$ with intensity $\lambda M$, which 
models the arrivals of claims, as well
as sequences of i.i.d. random variables $Y_1,Y_2,\ldots$ with values in $\R_+$,
the sizes of the claims. 
Moreover we assume that there are i.i.d. random variables 
$U_1,U_2,\ldots$ with values in the space of all possible insurance clients.
The $Y_k$'s and the $U_k$'s are assumed to be independent of each other and 
independent  of $N$.
That is, $N$ tells us at which time $\tau_k$ the $k$-th claim occurs, 
$Y_k$ models its
size and $U_k$ tells us who is affected. Actually, from the point of view of
an insurance company the only interesting information about $U_k$ is whether
it is one of their own clients who is affected or not. 
We will therefore assume that the $U_k$'s are uniformly distributed 
on $[0,1]$ and
that a particular insurance company which 
holds the $\xi$-th part of the market is therefore affected with 
probability $\xi$. 

Hence the claims process for this insurance company 
can be modeled by
$$
C_t^\xi:=\sum_{k=1}^{N_t}Y_k 1_{U_k\le\xi_{\tau_k}}\,.
$$
The cumulative claims process of all claims constitutes our industrial
loss index and is defined as
$$
C_t:=C^1_t=\sum_{k=1}^{N_t}Y_k \,.
$$
Note that, for constant $\xi\in[0,1]$, the process 
$C^\xi$ is a {\em thinning} of $C$ and therefore is a compound Poisson process 
with intensity $\xi\lambda M$ (see, e.g., \cite[Section 4.4]{Resnick}). 
The joint process $(C,C^\xi)$ is therefore a compound Poisson process with 
values in $[0,\infty)\times[0,\infty)$ and with jump distribution
\[
\P((\Delta C_{\tau_k},\Delta C^\xi_{\tau_k})\in A)=\xi \P((Y_k,Y_k)\in A)+
(1-\xi)\P((Y_k,0)\in A)\,.
\]
for all Borel measurable sets $A\in [0,\infty)\times[0,\infty)$.\\

\subsection{Demand function and wealth process}

We now turn to the function $q$ which determines the fraction of
the market that the company gets to insure depending on the risk
loading it chooses.
We assume that $q$ is a strictly decreasing continuous 
function on 
$\R$ with $q(\theta)=M$ for $\theta\le 0$ and $q(\theta)=0$ for 
$\theta\ge m$, where $m$ is some positive real number. Therefore 
$q$ is rather general, even the requirement that it vanishes above
some level $m$ is rather innocuous: suppose the annual premium were much 
larger than the expected claim size, then surely this insurance could not be
sold. \\

We further define $a:=\lambda\E(Y_1)$, the expected
annual claim per client. In particular we assume $\E(Y_1)<\infty$.
In fact we will have to assume the stronger assumption $\E(e^{b Y_1})<\infty$
for some given constant $b>0$. 

For a measurable function $\theta:[0,\infty)\rightarrow\R$ 
we therefore define the dynamics of the 
wealth process for the insurance company
as
\begin{eqnarray}
X^\theta_t
&:=&x_0+\int_0^ta(1+\theta_s)q(\theta_s)ds-\sum_{k=1}^{N_t}Y_k 1_{U_k\le q(\theta_{\tau_k})/M}\,, \label{eq:Xdyn}\\
&=&x_0+\int_0^ta(1+\theta_s)q(\theta_s)ds-C^{q(\theta)/M}_t
\end{eqnarray}
where $q$ is the absolute demand function 
as introduced above, and $x_0$ is the initial wealth.

The process $(C,X^\theta)$ is a special case of a 
{\em piecewise deterministic Markov process}, where the flow does not depend
on $X$. See \cite[Chapter 8]{BR2011} or \cite{BR2010} for the definition and
theory of piecewise deterministic Markov processes. The corresponding data 
is given by
\begin{itemize}
\item the state space $[0,\infty)\times \R$;
\item the control space $\R$ (we will later see that we may restrict the
controls to the compact space $[0,m]$);
\item the deterministic flow $d(C_t,X^\theta_t)=(0,a q(\theta_t)(1+\theta_t)) dt$
between jumps;
\item the jump intensity $M\lambda$;
\item the stochastic kernel $Q$, 
\begin{align*}
Q(A|(c,x),\theta)&=
q(\theta)\P((Y,-Y)\in A-(c,x))\\
&\quad+(1-q(\theta))\P((Y,0)\in A-(c,x))\,,
\end{align*}
where $A-(c,x))=\{(a_1-c,a_2-x):(a_1,a_2\in A)\}$;
\item the zero reward rate;
\item the discount rate, which we set 0 for simplicity. 
\end{itemize}

Here and throughout the paper, $Y$ denotes a random variable having the
same distribution as the $Y_k$.

A Markovian control for this system is a measurable function
\[
f:[0,\infty)\times\R\times [0,\infty)\longrightarrow \{\theta:[0,\infty)\rightarrow \R, \theta \text{ measurable}\}
\]
which describes for input data $(C_\tau,X_\tau,\tau$) ---
where $\tau$ is the time of
a jump and $(C_\tau,X_\tau)$ is the state of the process immediately after the jump --- 
the control until the next jump.

\subsection{Utility and utility indifference pricing}\label{sec:pricing}

Throughout the paper we assume that investors and insurance companies 
have a utility function, by which we
mean a function $u:\R\longrightarrow\R$ which is strictly increasing
and concave, and that they aim to maximize the expected
utility of their wealth at some time $T$ in the future.\\

We recall the general idea of utility indifference pricing as
introduced in Hodges and Neuberger \cite{Hodges}.
An excellent introduction is Henderson and Hobson \cite{henhob}, 
from which we will repeat the basic definitions. 

The utility indifference buy (or bid) price $p^b$ is the price at which the
investor is indifferent (in the sense that his expected utility under 
optimal trading is unchanged) between paying nothing and not having the claim
$Z_T$ and paying $p^b$ now and receiving the claim $Z_T$ at time $T$.

Consider the problem with $k\ge 0$ units of the claim. Assume an investor with 
utility $u$ who initially
has wealth $x$ and zero endowment. Define 
$$
V(x,k):=\sup_{X_T\in\cA(x)}\E(u(X_T+k Z_T))
$$
where $\cA(x)$ is the set of all wealths $X_T$ which can be generated 
from initial fortune $x$ by following admissible strategies. 
The {\em utility indifference buy price} 
$p^b(k)$ is the solution to 
\be\label{eq:buy-price-def}
V(x-p^b(k),k)=V(x,0)\,.
\ee
That is, the investor is willing to pay at most the amount $p^b(k)$ today for
$k$ units of the claim $Z_T$ at time $T$. Similarly the 
{\em utility indifference sell price} 
$p^s(k)$ is the smallest amount the investor is willing to accept in order
to sell $k\ge 0$ units of $Z_T$. That is, $p^s(k)$ solves
$$
V(x+p^s(k),-k)=V(x,0)\,.
$$
The two prices are related via $p^b(k)=-p^s(-k)$.
With this in mind we can define the {\em utility indifference price}
$p(k)$ as the solution to (\ref{eq:buy-price-def}) for all $k\in \R$.

Henderson and Hobson \cite{henhob} note two features of the utility indifference price.
\begin{itemize}
\item {\em Non-linear pricing:} In contrast to the Black-Scholes price (and
many alternative pricing methodologies in incomplete markets), utility 
prices are generally non-linear in the number of claims, i.e. $k$.
\item {\em Recovery of complete market price:} If the market is complete or
if the claim $Z_T$ is replicable, the utility indifference price $p(k)$
is equal to the complete market price for $k$ units of the claim. 
\end{itemize}

It should be noted that, in general, the utility indifference price $p(k)$ 
also depends
on $x$. 
This dependence usually vanishes for exponential utility:
 Suppose that  $u(x)=-\exp(-\eta x)$, for some $\eta>0$.
Then
\begin{align*}
V(x-p(k),k)&=\sup_{\theta\in\Theta}
\E\left(u(x-p(k)+G^\theta_T+kZ_T)\right)\\
&=-e^{-\eta x}e^{\eta p(k)}\inf_{\theta\in\Theta}\E\left(\exp(-\eta (G^{\theta}_T+kZ_T))\right)\,,\\
V(x,0)
&=-e^{-\eta x}\inf_{\theta\in\Theta}\E\left(\exp(-\eta (G^{\theta}_T))\right)\,,
\end{align*}
such that, with our former notation
\be
p(k)=-\frac{1}{\eta}\left(\log\left(\inf_{\theta\in\Theta}\E(\exp(-\eta (G^{\theta}_T+kZ_T)))\right)
-\log\left(\inf_{\theta\in\Theta}\E(\exp(-\eta G^{\theta}_T))\right)\right)\label{eq:buyers-price}
\ee
(provided that the arguments in the logarithms are finite).

See also \cite{becherer} for additional properties of the utility 
indifference price for exponential utility.

We mention another feature of the utility indifference price: Suppose
the payment $Z_T$ is independent of $G^\theta_T$ for every choice of
$\theta$. 
Then 
$$
\E(\exp(-\eta (G^{\theta}_T+kZ_T)))
=\E(\exp(-\eta G^{\theta}_T)\E(\exp(-\eta kZ_T))
$$
and therefore 
$$
p(k)=-\frac{1}{\eta}\log\E(\exp(-\eta kZ_T))\,.
$$
This price is also called the {\em certainty equivalence price} of the 
derivative $Z_T$.  Note that one special case where $Z_T$ is independent 
of $G^\theta_T$ occurs when $G^\theta_T$ is deterministic.

\section{Computation of the utility indifference price}\label{sec:computation}

\subsection{Optimal dynamic risk loading}\label{sec:optimal-strategy}

We now want to apply the concept of utility indifference pricing 
to the model presented in Section \ref{sec:model-setup}.
Consider a derivative written on $C_T$, the total claims process
at time $T$.
Let its payoff be of the form $\psi(C_T)$ where $\psi$ is 
a continuous and bounded function on $[0,\infty)$. For example, 
if the derivative is 
a CAT (spread) option then
$\psi$ has the form
$$
\psi(c)=\max(0,\min(c-K,L-K))\,.
$$

For a given utility function $u$ we want to maximize expected utility 
from terminal wealth, i.e.  we want to compute
$$
\sup_{\theta}\E(u(X^\theta_T+k\psi(C_T)))\,,
$$
where $\theta$ ranges over all Markovian controls.

We have the following simple lemma which allows us to limit our 
considerations to bounded $\theta$:

\begin{lemma}\label{th:theta-g-0}
Let $\theta:[0,\infty)\rightarrow \R$ be measurable. Define 
$\kappa$ by
$$
\kappa=\left\{
\begin{array}{cccc}
0&\mbox{ if }&\theta<0\\
\theta& \mbox{ if }&0\le\theta\le m\\
m & \mbox{ if }&\theta > m\,.
\end{array}
\right.
$$
Then $\kappa$ is measurable with values in $[0,m]$ and for all $\tau_n\le t<\tau_{n+1}$ 
\[
X^\kappa_t-X^\kappa_{\tau_n}\ge X^\theta_t-X^\theta_{\tau_n}\,.
\]
\end{lemma}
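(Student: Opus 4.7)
The plan is to reduce the statement to a pointwise comparison of the premium-income integrand, after noting that between $\tau_n$ and $t$ the only contribution to $X^\theta$ is deterministic.

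First, I would dispatch the measurability claim: $\kappa$ is obtained from $\theta$ by the continuous clipping map $x\mapsto \max(0,\min(x,m))$, so $\kappa$ is measurable and takes values in $[0,m]$ by construction.

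Next, I would unravel the wealth dynamics over $[\tau_n,t]$. Because $t<\tau_{n+1}$, no jump of $N$, and hence no jump of $C^{q(\theta)/M}$ or $C^{q(\kappa)/M}$, occurs in $(\tau_n,t]$. Consequently \eqref{eq:Xdyn} collapses to
\[
X^\theta_t-X^\theta_{\tau_n}=\int_{\tau_n}^t a\,(1+\theta_s)\,q(\theta_s)\,ds,
\]
and analogously for $\kappa$. Thus the inequality to be shown reduces to
\[
\int_{\tau_n}^t a\,(1+\kappa_s)\,q(\kappa_s)\,ds\;\ge\;\int_{\tau_n}^t a\,(1+\theta_s)\,q(\theta_s)\,ds,
\]
which in turn follows from the pointwise inequality $g(\kappa_s)\ge g(\theta_s)$, where $g(\vartheta):=(1+\vartheta)q(\vartheta)$.

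The pointwise inequality is proved by case distinction on the value of $\vartheta$. If $\vartheta\in[0,m]$ then $\kappa=\vartheta$ and equality is trivial. If $\vartheta<0$ then $\kappa=0$; by the standing assumption on $q$ one has $q(\vartheta)=M=q(0)$, while $1+\vartheta<1$, so $g(\vartheta)=(1+\vartheta)M<M=g(0)=g(\kappa)$. If $\vartheta>m$ then $\kappa=m$; the assumption $q(\vartheta)=0=q(m)$ gives $g(\vartheta)=0=g(m)=g(\kappa)$. Integrating the resulting inequality between $\tau_n$ and $t$ yields the claim.

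There is essentially no hard step here; the only thing to be careful about is to separate cleanly the jump part of the wealth (which is absent on $[\tau_n,t)$ since we stay strictly below $\tau_{n+1}$) from the absolutely continuous premium part, so that the comparison indeed reduces to an integrand-level inequality. The economic content is that values of $\theta$ outside $[0,m]$ are strictly suboptimal for premium income, since negative $\theta$ undercuts the fair rate without attracting any additional customers ($q$ has already saturated at $M$), while $\theta>m$ drives demand to zero.
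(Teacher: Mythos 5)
Your proof is correct and follows essentially the same route as the paper: after noting that no jump of $N$ occurs in $(\tau_n,t]$, the claim reduces to a pointwise comparison of the drift integrand $(1+\vartheta)q(\vartheta)$, which you handle by a three-case analysis. The paper's version is a touch slicker — it first observes that $q(\kappa)=q(\theta)$ always holds (since clipping at $0$ and $m$ does not change $q$), which lets it factor the difference of integrands as $a(\kappa_s-\theta_s)q(\kappa_s)\ge 0$ instead of checking the three cases separately — but the underlying argument is the same.
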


\begin{proof}
It holds that $q(\kappa)=q(\theta)$ throughout by our assumptions on $q$. 
We therefore have
\begin{align*}
X^\kappa_t-X^\kappa_{\tau_n}-X^\theta_t+X^\theta_{\tau_n}
&=\int_{\tau_n}^t a(1+\kappa_s)q(\kappa_s)- a(1+\theta_s)q(\theta_s)ds\\
&=a \int_{\tau_n}^t (\kappa_s-\theta_s) q(\kappa_s)\,\ge 0\,,
\end{align*}
since $\kappa\ge \theta$ if $\theta\le m$ and $q(\kappa)=0$ if 
$\theta> m$.
\eproof

Define the value function of the problem as 
\be\label{eq:value_f}
V(t,x,c,k):=\sup_{\theta}\E(u(X^\theta_T+k\psi(C_T))|X^\theta_t=x,C_t=c)\,,
\ee
where the supremum is taken over all Markovian controls $\theta$. 
According
to Lemma \ref{th:theta-g-0} we may concentrate on $\theta$ with 
values in $[0,m]$. 
Note that due to the boundedness of $\psi$ we have
for $\theta\equiv m$ that
$\E(u(X^\theta_T+k\psi{(C_T)})|X^\theta_t=x,C_t=c)>-\infty$ for all $x,c,k$,
such that $V(t,x,c,k)>-\infty$.
Furthermore the expected terminal wealth $X_T$ is bounded from  above,
since the growth rate of $X$ is bounded and only negative jumps can occur. 
$V$ is therefore a well-defined real-valued function.

\cite{BR2010} prove that under fairly general conditions there exists an
optimal {\em relaxed} control for finite horizon problems 
for piecewise deterministic Markov decision processes. They also give 
conditions under which there exists a {\em nonrelaxed} policy. Those
later conditions are not satisfied for our problem, but relaxed controls are
to weak for the purpose of utility indifference pricing.
We therefore take a slightly different path using a Hamilton-Jacobi-Bellman 
(HJB) equation and a slight variation of the verification theorem 
\cite[Theorem 8.2.8]{BR2011}. \\

\begin{definition}
A measurable function $b:[0,\infty)\times\R\longrightarrow [0,\infty)$ is called a
{\em bounding function} for the piecewise deterministic Markov decision model,
if there exist constants $c_u,c_Q,c_{\text{flow}}\ge 0$ such that
for all $c\in [0,\infty),x\in \R$
\begin{enumerate}
\renewcommand{\theenumi}{\roman{enumi}}
\item $|u(c,x)|\le c_u b(c,x)$;
\item $\int b(c,x)Q(dc\times dx|c,x,\theta)\le c_Q b(c,x)$ for all $\theta\in [0,m]$;
\item $b(c,\int_0^T \int_0^m a q(y)(1+y)\alpha_s(dy)ds)\le c_{\text{flow}}b(c,x)$ for all $\alpha\in \cR$. 
\end{enumerate}
Here $\cR$ is the space of {\em relaxed policies}, i.e. of measurable maps
$[0,\infty)\rightarrow  \cP([0,m])$, where $\cP([0,m])$ is the space
of all probability measures on the Borel $\sigma$-algebra on $[0,m]$.
See again \cite{BR2010}.
\end{definition}

\begin{definition}\label{def:gamma}
Let $b:[0,\infty)\times \R\longrightarrow [0,\infty)$ be a measurable function.
For some fixed $\gamma>0$ we define
\[
b(c,x,t):=b(c,x)\exp(\gamma(T-t))\,.
\]
Further we define, for any measurable function 
$v:[0,\infty)\times \R \times [0,T]\longrightarrow \R$,
\[
\|v\|_b:=\underset{(c,x,t)}{\mathrm{ess\;sup}}\frac{|v(c,x,t)|}{b(c,x,t)},
\]
where we set $\frac{0}{0}:=0$,
and we denote by $\B_b$ the Banach space
\[
\B_b:=\{v:[0,\infty)\times \R\times [0,T]\rightarrow [0,\infty): v \text{ is measurable and }\|v\|_b<\infty\}\,.
\]
\end{definition}

\begin{theorem}[Verification Theorem]\label{th:verification}
Let a piecewise deterministic Markov decision process be given with 
a bounding function $b$,  and 
$\E\big(|b(C_T,X^\theta_T)|\big|C_0=c,X_0=x\big)<\infty$ for all 
$\theta,x$. Suppose that 
$v\in C^{0,1,1}([0,\infty)\times\R\times[0,T])\cap \B_b$
is a solution of the HJB equation and that $f^*$ is a maximizer of the
HJB equation and defines a state process $(X^*_t)$.

Then $v=V$ and $\theta^*=f^*(X^*_{t-})$ is an optimal Markov policy 
(in feedback form).
\end{theorem}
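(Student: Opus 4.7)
The plan is to follow the standard verification argument for piecewise deterministic Markov decision processes (cf.\ \cite[Theorem 8.2.8]{BR2011}) and adapt it to the time-inhomogeneous finite-horizon setup used here. First I would fix an arbitrary Markovian control $\theta$ and let $(C_t, X^\theta_t)_{t\in[0,T]}$ be the resulting state process starting from $(c,x)$ at time $t$. Since $v\in C^{0,1,1}$, and since between jump times $C$ is constant while $X^\theta$ evolves according to the smooth flow $dX^\theta_s = a q(\theta_s)(1+\theta_s)\,ds$, an It\^o/Dynkin formula for piecewise deterministic Markov processes gives
\begin{equation*}
v(T,C_T,X^\theta_T) = v(t,C_t,X^\theta_t) + \int_t^T \cL^{\theta_s} v(s,C_s,X^\theta_s)\,ds + (M_T-M_t),
\end{equation*}
where $\cL^\theta$ is the formal generator attached to the control value $\theta$ --- the sum of the transport term along the flow direction and the compensated jump integral driven by the intensity $M\lambda$ and the kernel $Q$ --- and $M$ is a local martingale.

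Next I would use the HJB equation: by assumption the expression in the supremum is non-positive for every admissible control value and vanishes when the control is the maximizer $f^*(s,C_s,X_s)$, so $\cL^{\theta_s}v(s,C_s,X^\theta_s)\le 0$ pathwise for arbitrary $\theta$, with equality under $\theta^*=f^*(C_{\cdot-},X^*_{\cdot-})$. Combined with the terminal condition $v(T,c,x)=u(x+k\psi(c))$, taking conditional expectations and using that $M$ is a genuine martingale (see next paragraph) yields
\begin{equation*}
v(t,c,x) \;\ge\; \E\bigl[u(X^\theta_T+k\psi(C_T))\,\bigl|\,C_t=c,\,X^\theta_t=x\bigr],
\end{equation*}
with equality for $\theta^*$. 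Taking the supremum over $\theta$, and using Lemma~\ref{th:theta-g-0} to restrict to controls with values in $[0,m]$, identifies $v$ with $V$ and shows optimality of $\theta^*$.

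The main obstacle is purely technical: upgrading $M$ from a local to a true martingale and interchanging expectation with the decomposition above. This is exactly the role of the bounding function and of the weighted norm $\|\cdot\|_b$. Condition (i) together with $v\in\B_b$ controls the size of $u$ and $v$; condition (ii) controls the jump integral appearing in the compensator; and condition (iii) together with the multiplicative factor $\exp(\gamma(T-t))$ built into $b(c,x,t)$ (for $\gamma$ chosen large enough to absorb $c_Q M\lambda$) yields a Gronwall-type estimate, roughly $\E[b(C_s,X^\theta_s,s)]\le b(c,x,t)$ for $s\in[t,T]$. Combined with the assumption $\E[\,|b(C_T,X^\theta_T)|\,]<\infty$, this produces uniform integrability along any localising sequence of stopping times, so that the stopped martingale identities pass to the limit and the inequalities above are justified. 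Once this step is in place the verification concludes along the lines already sketched, and the hard part is really to keep the bookkeeping of the estimates on $b$ consistent with the form of $Q$ for our particular model.
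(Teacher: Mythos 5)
Your sketch is essentially the paper's own proof: the paper dispatches this in one line by saying the result ``can be proved like \cite[Theorem 8.2.8]{BR2011}'', and what you have written out --- the Dynkin decomposition along the deterministic flow and compensated jump part, the pathwise HJB inequality with equality at the maximizer $f^*$, and the use of the bounding function $b$ with the exponential weight to promote the local martingale to a true one and justify the passage to the limit along localizing times --- is exactly the content of that cited argument adapted to the finite-horizon, time-inhomogeneous setting. So the proposal is correct and takes the same route; it just unpacks the citation.
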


\begin{proof}
This can be proved like \cite[Theorem 8.2.8]{BR2011}.
\end{proof}

\begin{remark}
In the statement of Theorem 8.2.8 in \cite{BR2011} there is 
another condition required, namely that $\alpha_b<1$ for a
number $\alpha_b$ depending on $b,Q$ and the arbitrary $\gamma$ from
Definition \ref{def:gamma}. But it is shown in \cite{BR2010} that
for finite horizon problems $\gamma$ can always be chosen large enough to 
satisfy $\alpha_b<1$.
\end{remark}

For the sake of brevity we fix $k$ for the remainder of this 
subsection and we
suppress the dependence of $V$ on $k$.

The generator of $(C_t,X^\theta_t)$ is 
\begin{align*}
\cA^\theta(v)(c,x)
&=q(\theta)\lambda\E (v(c+Y,x-Y)-v(c,x))\\
&\quad+(M-q(\theta))\lambda\E(v(c+Y,x)-v(c,x))\,,
\end{align*}
for $v:[0,\infty)\times\R\longrightarrow\R$ bounded and measurable.
With this, the HJB equation for our problem is
\begin{equation}
0 =\sup_\theta\Big(V_t(c,x,t)
+V_x(c,x,t)a(1+\theta_t)q(\theta_t)
+\cA V(c,x,t)\Big)
\end{equation}
We introduce the shorthand notations
\beast
\hat{V}(c,x,s)&:=&\lambda \E\big(V(c+Y,x,s)-V(c,x,s)\big) \\
\bar{V}(c,x,s)&:=&\lambda \E\big(V(c+Y,x-Y,s)-V(c+Y,x,s)\big) 
\eeast

Reordering of terms 
gives the compact form
\begin{equation}
\begin{array}{rcl}
V_t+M\hat{V}+\sup_{\alpha\in[0,m]}[q(\alpha)(a(1+\alpha)V_x+\bar{V})]&=&0\\[0.4em]
V(c,x,T)&=&u(x+k\psi(c))\,.
\end{array}
\label{eq:HJBcat0}
\end{equation}
We note that if $V\in C^{0,1,1}([0,\infty)\times\R\times[0,T])\cap \B_b$, then $V_x>0$ since $u$ is strictly increasing and
concave, so that the HJB equation can be written as
\begin{equation}
\begin{array}{rcl}
V_t+M\hat{V}+V_x\sup_{\alpha\in[0,m]}[q(\alpha)(a(1+\alpha)+\frac{\bar{V}}{V_x})]&=&0\\[0.4em]
V(c,x,T)&=&u(x+k\psi(c))\,.
\end{array}
\label{eq:HJBcat}
\end{equation}

For all $z\in \R$ 
the function $\alpha\mapsto q(\alpha)(a(1+\alpha)+b)$ is continuous on $[0,m]$
and therefore attains its maximum.
Define the function
\be\label{eq:defmu}
\mu{(z)}:=\max{\{q(\alpha)(a(1+\alpha)+z):{\alpha}\in{[0,m]}\}}\,.
\ee
With this we can write down the following backward equation in $V$,
\be
{\label{eq:beV}}
\begin{array}{rcl}
V_t(c,x,t)+M\hat{V}(c,x,t)+V_x(c,x,t)\mu\left(\frac{\bar{V}(c,x,t)}{V_x(c,x,t)}\right)&=&0\\[0.4em]
V(c,x,T)&=&u(x+k\psi(c))\,.
\end{array}
\ee

%
\begin{assumption}
From now on, we restrict our considerations to exponential utility,  
$u(x)=-e^{-\eta x}$. 
\end{assumption}

We make the usual ansatz $V(c,x,t)=u(x)e^{-\eta W(c,t)}$, such that
\beast
V_t(c,x,t)&=&-\eta W_t(c,t)V(c,x,t)\\
V_x(c,x,t)&=&-\eta V(c,x,t)\\
\hat V(c,x,t)
&=&V(c,x,t)\lambda\E(e^{-\eta(W(c+Y,t)-W(c,t))}-1)\\
\bar V(c,x,t)
&=&V(c,x,t)\lambda\E((e^{\eta Y}-1)e^{-\eta(W(c+Y,t)-W(c,t))})\,.
\eeast

We  introduce similar short-hand notations as before:
\beast
\hat W(c,t)&:=&-\frac{1}{\eta}\lambda\E(e^{-\eta(W(c+Y,t)-W(c,t))}-1)\\
\bar W(c,t)&:=&
-\frac{1}{\eta}\lambda\E((e^{\eta Y}-1)e^{-\eta(W(c+Y,t)-W(c,t))})\,.
\eeast

Substituting into the backward equation \eqref{eq:beV} for $V$
and dividing by $-\eta V$ gives us a backward equation for $W$:
\be
\begin{array}{rcl}
W_t(c,t)+M\hat{W}(c,t)+\mu{(\bar{W}(c,t))}&=&0\\[0.4em]
W(c,T)&=&k\psi(c)
\end{array}\label{eq:beW}
\ee

Note that since $Y\ge 0$ we have $e^{\eta Y}-1\ge 0$ so that $\bar W(c,t)\le 0$
always, with strict inequality  if $\P(Y>0)>0$.
As a consequence, the argument of
$\mu$ will always be negative.

Note further that in order to have finite-valued $\bar W$
we need to make the following assumption.

\begin{assumption}
We assume $\E(e^{\eta Y})<\infty\,$.
\end{assumption}

If we can show that the backward equation \eqref{eq:beW} has a bounded and 
continuous solution
which is differentiable with respect to the time component, and if we can
present a maximizer for this solution, then we
are done. We defer those proofs to Section \ref{sec:verification}.

For the time being we assume the existence of $W$ and $V$, 
and see what we can do with it.

\subsection{Utility indifference price}\label{sec:buyer-price}

We can now -- provided we can solve the corresponding backward equation (\ref{eq:beW}) --  compute the utility indifference
price of a derivative with continuous and bounded payoff $\psi$.
At time $t$ the maximum expected terminal utility of terminal wealth if the
derivative is bought at price $p$ is 
given by
\beast
V(x-p,c,t,k)&=&u(x-p)\exp(-\eta W(c,t,k))\\
&=&u(x)\exp(\eta p)\exp(-\eta W(c,t,k))\,,
\eeast
where $W(.,.,k)$ is the solution to the backward equation (\ref{eq:beW}).
With no derivative bought, the maximum expected terminal utility of 
terminal wealth at time $t$ is given by
$$
V(c,x,t)=u(x)\exp(-\eta W(c,t,0))\,.
$$

Therefore equation (\ref{eq:buyers-price}) takes on the following simple form:
\be\label{eq:indiff_price}
p=p(c,t,k)=W(c,t,k)-W(c,t,0)\,.
\ee
It should be noted that, since $W(c,T,0)\equiv 0$, 
the  expression $W(c,T,0)$  does not depend on $c$. 
It follows from the equation (\ref{eq:beW})
for $W$ that
also $W(c,t,0)$ for $t<T$ does not depend on $c$. 
$W(.,.,0)$ therefore may be computed
as the solution of a one-dimensional ordinary differential equation: Let $W^0$
denote the solution to
\be\label{eq:bew0}
\begin{array}{rcl}
W^0_t(t)+M\hat{W}^0(t)+\mu{(\bar{W}^0(t))}&=&0\\[0.4em]
W^0(T)&=&0\,.
\end{array}
\ee
Then $W(c,t,0)=W^0(t)$ for all $c,t$.  

\begin{lemma}\label{th:constant}
$W^0(t)= \mu(-\frac{1}{\eta}\lambda \E(e^{\eta Y}-1)) (T-t)$.
\end{lemma}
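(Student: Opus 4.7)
The plan is straightforward: substitute the observation that $W^0$ has no $c$-dependence into the backward equation \eqref{eq:bew0} and notice that it collapses to a trivial ODE.

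First, since $W^0$ depends only on $t$, the quantities $W(c+Y,t) - W(c,t)$ appearing inside the expectations in the definitions of $\hat W$ and $\bar W$ reduce to $W^0(t) - W^0(t) = 0$. Therefore the exponential factor $e^{-\eta(W(c+Y,t)-W(c,t))}$ simply becomes $1$. Consequently,
\[
\hat W^0(t) = -\frac{1}{\eta}\lambda\, \E(1-1) = 0,
\]
and
\[
\bar W^0(t) = -\frac{1}{\eta}\lambda\, \E(e^{\eta Y}-1),
\]
which is in particular independent of $t$ (and finite by the assumption $\E(e^{\eta Y})<\infty$).

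Plugging this into \eqref{eq:bew0}, the backward equation becomes
\[
W^0_t(t) + \mu\!\left(-\tfrac{1}{\eta}\lambda\, \E(e^{\eta Y}-1)\right) = 0, \qquad W^0(T)=0.
\]
This is a scalar ODE whose right-hand side is a constant, so direct integration with terminal condition $W^0(T)=0$ yields the claimed formula $W^0(t) = \mu\!\left(-\frac{1}{\eta}\lambda\, \E(e^{\eta Y}-1)\right)(T-t)$.

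There is no real obstacle here; the only substantive step is justifying that $W^0$ indeed depends on $t$ only, which the text established just before the lemma by noting that the terminal datum $W(c,T,0)\equiv 0$ is independent of $c$ and that the structure of \eqref{eq:beW} then propagates this independence backwards in time. Given that, the computation is mechanical.
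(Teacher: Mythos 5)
Your proof is correct and rests on exactly the same core observation as the paper's: once one knows $W^0$ is a function of $t$ only, the terms $W(c+Y,t)-W(c,t)$ vanish inside the expectations, so $\hat W^0\equiv 0$ and $\bar W^0\equiv -\frac{1}{\eta}\lambda\E(e^{\eta Y}-1)$ (a constant). The two proofs then diverge in style: the paper proposes the ansatz $W^0(t)=\kappa(T-t)$, checks that it satisfies \eqref{eq:bew0} iff $\kappa=\mu(-\frac{1}{\eta}\lambda\E(e^{\eta Y}-1))$, and then invokes the uniqueness result (Theorem~\ref{th:backward_ode}) to conclude that this is the only solution; you instead note that, because $\hat W^0$ and $\bar W^0$ take these fixed values for \emph{any} $t$-only candidate, \eqref{eq:bew0} is in fact a scalar ODE with constant right-hand side and can be integrated directly from the terminal condition, which makes the appeal to a general uniqueness theorem unnecessary. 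Your route is slightly more economical; and for what it is worth, you also get the sign of $\bar W^0$ right, whereas the paper's printed proof has a sign typo (it writes $+\frac{1}{\eta}\lambda\E(e^{\eta Y}-1)$) even though its final formula for $\kappa$ is consistent with the negative sign.
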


\bproof
Indeed, we see that if $W^0(t)=\kappa(T-t)$ for some constant $\kappa$, 
we have
$\hat W^0(t)=0$ and 
$\bar W^0(t)=\frac{1}{\eta}\lambda \E(e^{\eta Y}-1)$ for all 
$t\le T$. Therefore, $W^0(t)$ solves
$$
\begin{array}{rcl}
W^0_t(t)+M\hat{W}^0(t)+\mu{(\bar{W}^0(t))}&=&0\\[0.4em]
W^0(T)&=&0
\end{array}
$$
iff $\kappa=\mu(-\frac{1}{\eta}\lambda \E(e^{\eta Y}-1))$.

The uniqueness of this solution follows from the general existence and 
uniqueness theorem in Section \ref{sec:verification}, Theorem \ref{th:backward_ode}.
\eproof

Using this lemma we can write down 
a backward equation for $p$: Since  from equation (\ref{eq:indiff_price})
and Lemma \ref{th:constant}
\be
p=p(c,t,k)=W(c,t,k)-\kappa(T-t)
\ee
with $\kappa=\mu(-\frac{1}{\eta}\lambda \E(e^{\eta Y}-1))$,
we get from (\ref{eq:beW})
\be\label{eq:bep}
\begin{array}{rcl}
p_t(c,t,k)-\kappa +M\hat{p}(c,t,k)+\mu{(\bar{p}(c,t,k))}&=&0\\[0.4em]
p(c,T,k)&=&k\psi(c)\,,
\end{array}
\ee
where 
\be\label{eq:bep-abbrev}
\begin{array}{rcl}
\hat p(c,t,k)&:=&-\frac{1}{\eta}\lambda\E(e^{-\eta(p(t,c+Y,k)-p(c,t,k))}-1)\\[0.4em]
\bar p&:=&-\frac{1}{\eta}\lambda\E((e^{\eta Y}-1)e^{-\eta(p(t,c+Y,k)-p(c,t,k))})\,.
\end{array}
\ee

We now look at another aspect of the backward equation in the special case
where
the payoff is such that for some positive $L$ we have 
$\psi(c)=A$ for all $c>L$. This is certainly satisfied for the 
aforementioned examples of spread option and CAT-bond and indeed for 
most reasonable bounded payoff functions.

Under this assumption we have for $c\ge L$ that $\hat p(c,t)=0$ and 
$\bar p(c,t)=-\frac{1}{\eta}\lambda \E(e^{\eta Y}-1)$ for $c>L$.
Therefore we get from \eqref{eq:bep} and the definition of $\kappa$ 
$$
p_t(c,t)=0
$$
for $c\ge L$, and since $p(c,T)=A$, 
$$
p(c,t)=A\,
$$
for $c\ge L$.
This means that if $\psi$ is constant above a cutoff level $L$, then 
we have to compute $W$ for $c\in [0,L]$ only. This obviously simplifies the
numerics. We want to stress however, that all the theoretical results hold
without the above assumption and the numerics can deal with this case quite
analog to many other pricing models where the solution on an unbounded
interval is approximated by a function on a compact interval.

\subsection{Linear demand}\label{sec:linear-demand}

We now consider the particularly simple 
case where $q$ is linear
on the interval $[0,m]$, i.e.
$$
q(\theta)=M\min(1,\max(1-\frac{\theta}{m},0))\,.
$$
Here $\mu$ and  can simply be calculated:
$$
\mu(z)=\left\{
\begin{array}{cl}
0&z\le-a(m+1)\\
M(a+z)&z\ge a(m-1)\\
\frac{M}{4am}(a(1+m)+z)^2&\mbox{else}\\
\end{array}
\right.
$$
and the maximum is attained in 
$$
\gamma(z):=\left\{
\begin{array}{cl}
m&z\le-a(m+1)\\
0&z\ge a(m-1)\\
\frac{a(m-1)-z}{2a}&\mbox{else}\,.
\end{array}
\right.
$$
The simple proof is left to the reader.
%
%
%
%

\subsection{Demand functions with unique optimal risk loading}
\label{sec:non-linear_demand}

In our general derivation the only requirements on the demand function $q$ were
that it be continuous, non-increasing with  
\be\label{eq:q-boundary}
q(\theta)=\left\{
\begin{array}{c@{\;\text{ if }\;}c@{\;\,\!}c@{\;\,\!}l}
M&\theta&\le& 0\\
0&\theta&\ge& m\,.
\end{array}\right.
\ee
In general such a function $q$ will lead to more than one  optimal strategy.
Though the value function does not depend on the particular choice of the
optimal strategy, multiple optimal strategies generate practical difficulties,
for example for numerical computation of the optimal strategy/value functions.

Linear demand is obviously not the only example that allows for exact and
unique computation of the optimal strategy.  Let us consider the class of
functions $q$ which satisfy \eqref{eq:q-boundary}, are twice continuously
differentiable on $(0,m)$, have negative derivative on  $(0,m)$ and for which 
$$
f_q(\alpha,z):=q(\alpha)(a(1+\alpha)+z)
$$
has a unique maximum in $(-\infty,m]$ for all $b$. 

From our assumptions on $q$ we have
$q(\alpha)\ge 0$ so that $f_q(\alpha,z)<0$ iff $a(1+\alpha)+z<0$.
Since $f_q(m,z)=0$, we therefore know that the optimal $\alpha$, if it exists,
must satisfy $a(1+\alpha)+z>0$, i.e. $\alpha>-1-\frac{z}{a}$.

We are therefore interested in demand functions $q$ for which 
$f_q(.,z)$ has a unique maximum in 
$(-1-\frac{z}{a},m]$  if $-1-\frac{z}{a}<m$. 
(For $-1-\frac{z}{a}\ge m$ the function $f_q(.,z)$ attains its maximum in 
$m$.)

A sufficient condition for this is that 
$\alpha \mapsto f_q(\alpha,z)$ is strictly concave on $[-1-\frac{z}{a},m]$.

\begin{theorem}\label{th:special-demand}
Let $q$ be of the form
$$
q(\theta)=\left\{
\begin{array}{cccc@{\;\,\!}l}
M&\text{ if }&\theta&\le& 0\\
0&\text{ if }&\theta&\ge& m\\
M-\int_0^\alpha e^{-\frac{2\xi}{1+m}}H(\xi)d\xi&else
\end{array}\right.
$$
for some function $H:[0,m]\rightarrow \R$ satisfying
\begin{enumerate}
\renewcommand{\theenumi}{\roman{enumi}}
\item $H$ is differentiable on $(0,m)$;
\item $H'>0$ on $(0,m)$;
\item $H>0$ on $(0,m)$;
\item $\int_0^m e^{-\frac{2\xi}{1+m}}H(\xi)d\xi=M$.
\end{enumerate}
Then $f_q(.,z)$ is strictly concave on $[-1-\frac{z}{a},m]$.
\end{theorem}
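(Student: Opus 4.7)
The plan is to differentiate $q$ explicitly on $(0,m)$, compute $\partial_{\alpha}^{2} f_q(\alpha,z)$, and deduce its sign from conditions (ii)--(iii) on $H$ together with the fact that $\beta:=a(1+\alpha)+z$ is nonnegative and bounded above on the interval $[-1-z/a,m]$. On the complementary subinterval $[-1-z/a,0]$, when nonempty, $q\equiv M$ makes $f_q(\cdot,z)$ linear, and this piece must be addressed separately.

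Direct differentiation of the defining formula for $q$ yields, for $\alpha\in(0,m)$,
\[
q'(\alpha)=-e^{-2\alpha/(1+m)}H(\alpha),\qquad q''(\alpha)=e^{-2\alpha/(1+m)}\Bigl(\tfrac{2}{1+m}H(\alpha)-H'(\alpha)\Bigr).
\]
Since $\partial_{\alpha}f_q=q'(\alpha)\beta+aq(\alpha)$, one more differentiation and collection of terms gives
\[
\partial_{\alpha}^{2}f_q(\alpha,z)=\frac{e^{-2\alpha/(1+m)}}{1+m}\Bigl(2H(\alpha)\bigl(\beta-a(1+m)\bigr)-(1+m)\beta H'(\alpha)\Bigr).
\]

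The core step is to show the bracket is strictly negative for $\alpha\in(0,m)$. For $\alpha\in[-1-z/a,m]$ one has $\beta\in[0,a(1+m)+z]$. In the context of the HJB equation \eqref{eq:HJBcat} the argument $z$ is $\bar W(c,t)$, which is nonpositive (by the observation made just after \eqref{eq:beW} using $Y\ge 0$), so $\beta-a(1+m)\le z\le 0$. Combined with (iii) this makes the first summand $\le 0$; combined with (ii) and $\beta\ge 0$ the second summand is $\le 0$. Strictness is immediate from inspection: if $\beta>0$ the second summand is strictly negative, while if $\beta=0$ the first reduces to $-2aH(\alpha)(1+m)<0$. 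Hence $\partial_{\alpha}^{2}f_q<0$ on $(0,m)$, which is the substantive content. The exponential weight $e^{-2\xi/(1+m)}$ in the definition of $q$ is precisely what is needed to produce the cancellation $\beta-a(1+m)$ in the first summand, so that the sign verification reduces to the elementary bound $\beta\le a(1+m)$.

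The main obstacle is the literal statement ``strictly concave on $[-1-z/a,m]$''. When $-1-z/a<0$ the subinterval $[-1-z/a,0]$ is nonempty and $f_q(\alpha,z)=M(a(1+\alpha)+z)$ is linear there, hence only weakly concave. What the computation establishes is strict concavity on the open piece $(0,m)$ where $q$ is genuinely decreasing; combined with continuity at $0$ and the strictly increasing linear behaviour on $[-1-z/a,0]$, which pushes any maximizer into $[0,m]$, this still delivers the uniqueness of the maximum that is really at stake. Condition (iv) enters only implicitly, to ensure that $q$ is well defined and continuous at the endpoint $\alpha=m$, and conditions (i)--(iii) are exactly what the sign analysis of the two summands requires.
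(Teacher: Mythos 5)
Your proof is correct, and the key identity is the same one the paper uses: both rest on
\[
\partial_\alpha^2 f_q(\alpha,z)=q''(\alpha)\bigl(a(1+\alpha)+z\bigr)+2aq'(\alpha),
\]
together with the observation that the exponential weight $e^{-2\xi/(1+m)}$ is tailored so that, after substitution, negativity of the second derivative reduces to the elementary bounds $0\le\beta\le a(1+m)$, $H>0$, $H'>0$. The difference is one of direction. The paper runs the calculus \emph{backwards}: it first rewrites $\partial_\alpha^2 f_q<0$ as $q''/q'>-2a/(a(1+\alpha)+z)$, bounds the right-hand side above by $-2/(1+m)$ using $z\le 0$ and $\alpha\le m$, posits $q''/q'=-2/(1+m)+h_1$ with $h_1>0$, and then integrates twice to \emph{derive} the form of $q$ in terms of $H=\exp(h_2)$. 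You instead take that form as the hypothesis, differentiate twice, and check the sign directly. The content is identical; the paper's route explains where the weight $e^{-2\xi/(1+m)}$ comes from, yours verifies it more concisely. One small but genuine point in your favor: you explicitly note that when $-a<z\le 0$ the interval $[-1-z/a,m]$ contains a segment $[-1-z/a,0]$ on which $q\equiv M$, so $f_q(\cdot,z)$ is affine (derivative $aM>0$) rather than strictly concave, and the theorem's literal wording is too strong; your remark that $f_q$ is strictly increasing there, so any maximizer is forced into $[0,m]$ where strict concavity does hold, correctly recovers the uniqueness property the theorem is actually invoked for. The paper's proof silently restricts attention to $\alpha\in(0,m)$ and never addresses this segment.
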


\begin{proof}
$$
\frac{\partial^2}{\partial\alpha^2}f_q(\alpha,z)
=q''(\alpha)(a(1+\alpha)+z)+2 q'(\alpha)a\,,
$$
which is negative for $\alpha\in (-1-\frac{z}{a},m)$ iff
$$
\frac{q''(\alpha)}{q'(\alpha)}>-\frac{2a}{a(1+\alpha)+z}\,.
$$
Since $z\le 0$, the right hand side is always greater or equal to 
$-\frac{2a}{a(1+\alpha)}=-\frac{2}{(1+\alpha)}\ge -\frac{2}{(1+m)}$.

We have therefore shown that if 
\be\label{eq:concave1}
\frac{q''(\alpha)}{q'(\alpha)}=-\frac{2}{1+m}+h_1(\alpha)
\ee
for some continuous function $h_1:[0,m]\rightarrow\R$ which is
positive on $(0,m)$, then
$f''_q(\alpha,z)<0$ for all $\alpha\in (-1-\frac{z}{a},m)$, $z\le 0$.
But 
$$
\begin{array}{crcl}
&\frac{q''(\alpha)}{q'(\alpha)}&=&-\frac{2}{1+m}+h_1(\alpha)\\
\Longleftrightarrow& \frac{d}{d\alpha}\log(|q'(\alpha)|)&=&-\frac{2}{1+m}+h_1(\alpha)\\
\Longleftrightarrow& \log(|q'(\alpha)|)&=&-\frac{2\alpha}{1+m}+h_2(\alpha)\\
\Longleftrightarrow& q'(\alpha)&=&-\exp\left(-\frac{2\alpha}{1+m}+h_2(\alpha)\right)\\
\Longleftrightarrow& q'(\alpha)&=&-\exp\left(-\frac{2\alpha}{1+m}\right)H(\alpha)
\end{array}
$$
where $h_2$ is a primitive function of $h_1$, i.e. $h_2'=h_1$ and 
$H(\alpha)=\exp(h_2(\alpha))$. $H$ is a positive, continuously differentiable
function with  $H'>0$.
So if with this $H$ we define
$$
q(\alpha)=M-\int_0^\alpha\exp\left(-\frac{2\xi}{1+m}\right)H(\xi)d\xi
$$
we have $q(0)=M$ and if we further have
$$
\int_0^m\exp\left(-\frac{2\xi}{1+m}\right)H(\xi)d\xi=M\,,
$$
then $q(m)=0$.
\eproof

Note that we recover linear demand for 
$H(\xi)=\frac{M}{m}e^{\frac{2\xi}{1+m}}$.
Other simple examples are arrived at by using a function $H$ of the form
$H(\xi)=c\, P(\xi) e^{\frac{2\xi}{1+m}}$, where $P$ is some
polynomial which satisfies $P(\xi)>0$ and $P'(\xi)>0$ for $\xi\in(0,m)$ and 
$$
c=M \left(\int_0^m P(\xi)d\xi\right)^{-1}\,.
$$
Other noteworthy examples are provided by
$q(\alpha)=M(1-(\frac{\alpha}{m})^\nu)$ on $[0,m]$ with $\nu>0$.

\subsection{Numerical example}\label{sec:numerical-example}

In this section we consider a numerical example for the model
proposed earlier. All further illustrations will refer to this setup.

We concentrate on the special case of linear demand from 
Section \ref{sec:linear-demand} and take
the following values for the problem: $T=\frac{1}{4}$, $\lambda=0.01$,
$M=10^4$, $m=2$.
The claim sizes $Y_i$ are distributed on $\{\delta,2 \delta,3 \delta,4 \delta, 5\delta\}$ with
$\delta=10^5$ and corresponding probabilities  
$\frac{1}{8},\frac{3}{8},\frac{2}{8},\frac{1}{8},\frac{1}{8}$. 
The risk aversion coefficient is $\eta=10^{-6}$.

We consider the payoff $\psi(c)=\max(0,\min(c-K,L-K))$, where
$K=10^7$, $L=3 \cdot 10^7$. We therefore have 
$\psi(c)=A$ for $c>L$ with $A=2 \cdot 10^7$.
Note that in our setup  the  PIDEs \eqref{eq:beW} and  
\eqref{eq:risk-neutral} become ordinary differential equations in $\R^n$,
where $n=\frac{A}{\delta}+1$. This has the consequence that this example
can be computed rather efficiently.

Figure \ref{fg:price} shows the price $p^b(.,t)$ of the derivative as a function
of $C_t$. The darkest line shows the price at expiry, which is equal to the
payoff, the lightest line shows the price at time $t=0$. Both axes 
are million units of currency.

\begin{figure}
\begin{center}
\ifpdf
\input{gfx_uibp.pdftex_t}
\else
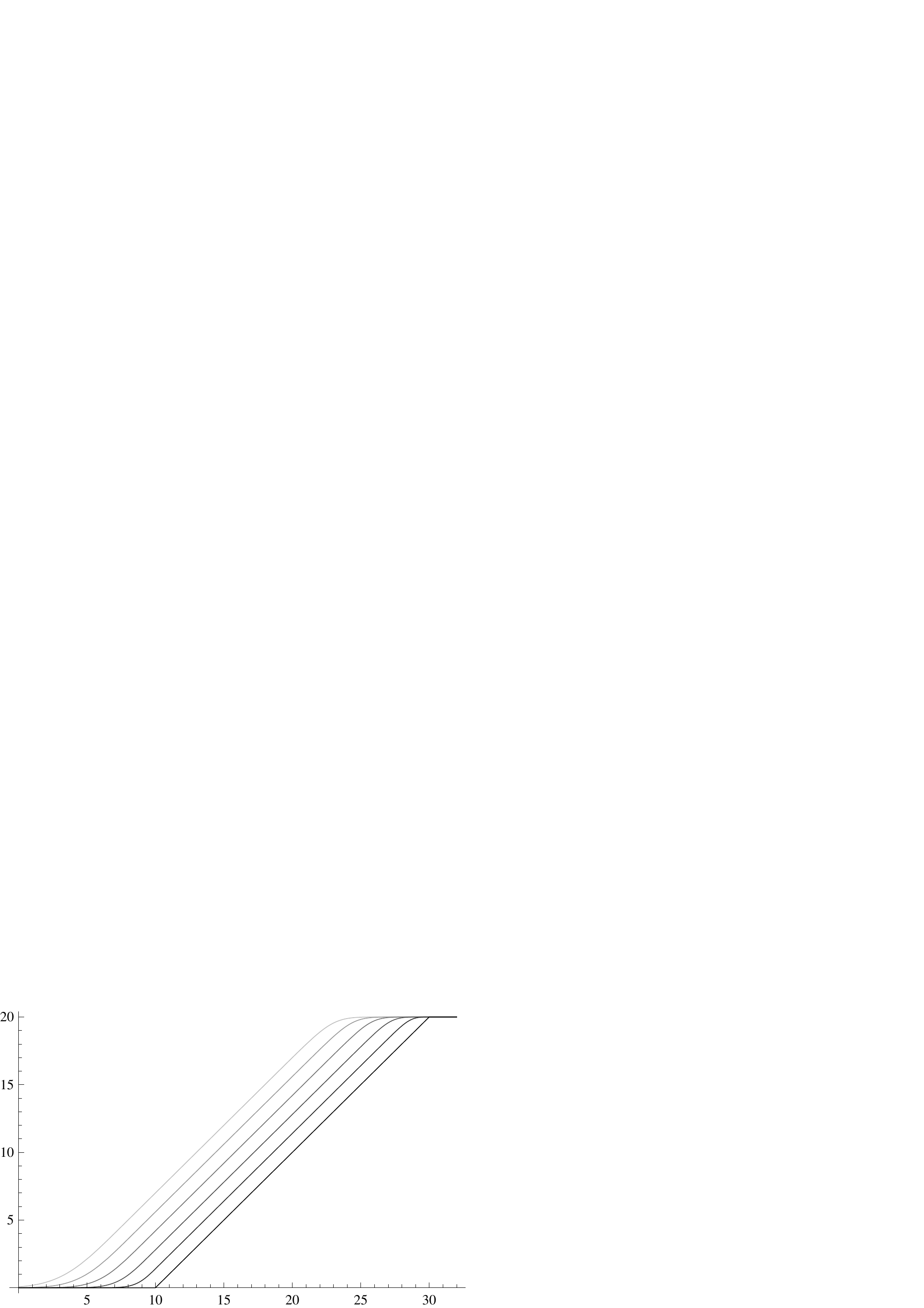
\fi
\end{center}
\captionof{figure}{The buyer's price $p^b(.,t)$ for various values of $t\in [0,T]$. Lighter shades of gray correspond to earlier times.}\label{fg:price}
\end{figure}

As is to be expected, the price of the derivative is a smoothed version
of the payoff shifted to the left. This shift is due to the fact 
that the claims process is non-decreasing with time.

Figure \ref{fg:riskloading} shows the risk loading $\theta_t$ as a function
of $C_t$. The darkest line shows the risk-loading close to expiry, 
the lightest line shows the risk-loading at time $t=0$. 
The $x$-axis is in million units of currency.
We see that for some parameters (e.g. $t=0$, $C_t=15\cdot 10^6$), due to the 
presence of the derivative the risk loading is pushed down roughly 
from 1.09 to 0.93. That means that the derivative makes the insurance more
than 10 percent cheaper.  Obviously this effect vanishes for $C_t>L$, in which
case the derivative corresponds to a deterministic payment, such that 
the risk loading reverts to the one without a derivative present.

For $C_t$ far below the ``strike'' $K$ the risk loading is also higher, which
can be explained by the relatively high probability for the derivative
to have zero payoff, such that the risk loading is close to that without
a derivative present.

\begin{figure}
\begin{center}
\ifpdf
\input{gfx_theta.pdftex_t}
\else
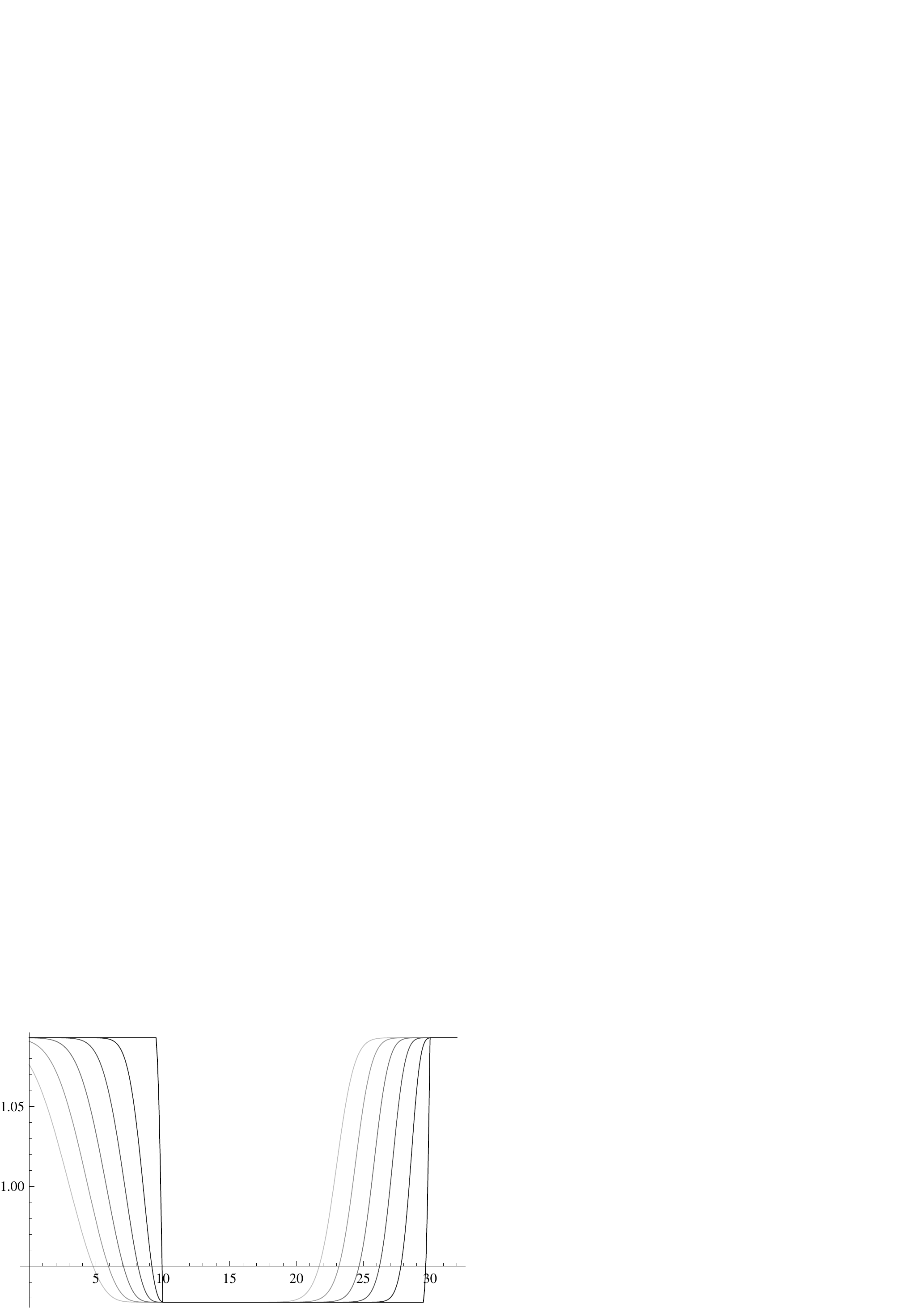
\fi
\end{center}
\captionof{figure}{Here we plot the risk loading $\theta(c,t)$ as a function of $c$. Lighter shades of gray correspond to earlier times. }\label{fg:riskloading}
\end{figure}

\subsection{Verification}\label{sec:verification}

Now we want to convince ourselves that
there exists a solution to (\ref{eq:beW}), 
with a corresponding maximizer and 
that the conditions of Theorem 
\ref{th:verification} are satisfied.

\begin{lemma}
$b(c,x):=\exp(\eta x)$ is a bounding function for our piecewise deterministic Markov decision model
with bounded payoff function $\psi$.
\end{lemma}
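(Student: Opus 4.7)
The plan is to verify in turn the three defining properties (i)--(iii) of the bounding function for $b(c,x)=e^{\eta x}$. All three reduce to short direct computations that use boundedness of $\psi$, the uniform bounds $q\le M$ and $\theta\in[0,m]$, non-negativity of $Y$, and the exponential moment assumption $\E(e^{\eta Y})<\infty$.

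For condition (i), the instantaneous reward rate of our model is zero (``the zero reward rate'' listed in the PDMP data), so the required bound holds immediately with any $c_u\ge 0$. Separately, the terminal reward $u(x+k\psi(c))=-e^{-\eta(x+k\psi(c))}$ is controlled via boundedness of $\psi$ by $|u(x+k\psi(c))|\le e^{\eta|k|\|\psi\|_\infty}e^{-\eta x}$, which is the estimate needed along the trajectories entering the verification argument.

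For condition (ii), I would expand the two-point jump kernel. Writing $\xi=q(\theta)/M\in[0,1]$ and using that $b$ depends only on $x$, one obtains
\[
\int b(c',x')\,Q(dc'\,dx'\mid(c,x),\theta) \;=\; b(c,x)\bigl[\xi\,\E(e^{-\eta Y})+(1-\xi)\bigr].
\]
Since $Y\ge 0$ and $\eta>0$ imply $\E(e^{-\eta Y})\le 1$, the bracket is bounded above by $1$ uniformly in $\theta\in[0,m]$, so $c_Q=1$ works. For condition (iii), any relaxed policy $\alpha\in\cR$ produces a cumulative drift
\[
F \;:=\; \int_0^T\!\!\int_0^m aq(y)(1+y)\,\alpha_s(dy)\,ds \;\le\; aTM(1+m),
\]
because $q\le M$ and $1+y\le 1+m$ on $[0,m]$. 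Consequently $b(c,x+F)/b(c,x)=e^{\eta F}\le e^{\eta aTM(1+m)}$, which yields the required $c_{\text{flow}}$ uniformly in $x$ and $\alpha$.

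The only point requiring attention is the uniformity of all constants in the control parameters $\theta$ and $\alpha$, which is automatic here from the compactness of $[0,m]$ and the uniform bounds on $q$ and $1+y$. The additional integrability hypothesis $\E|b(C_T,X^\theta_T)|<\infty$ of Theorem~\ref{th:verification} then follows by iterating the kernel estimate along the jump times of $N$ (Poisson of rate $M\lambda$), and this is where the moment assumption $\E(e^{\eta Y})<\infty$ is genuinely needed; I expect this iterated moment bound to be the main technical step.
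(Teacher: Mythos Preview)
Your computations for (ii) and (iii) follow the same direct verification as the paper. The gap is in (i): you read the $u$ there as the zero running reward rate, but in this definition $u$ denotes the terminal payoff $u(c,x)=-e^{-\eta(x+k\psi(c))}$; this is exactly what the paper's proof checks, and it is what is needed so that the candidate value function $V(c,x,t)=-e^{-\eta x}e^{-\eta W(c,t)}$ lies in $\B_b$ as required by the verification theorem. Your own side remark bounds the terminal reward by a constant times $e^{-\eta x}$, which is \emph{not} $b(c,x)=e^{\eta x}$, so (i) is not established. In fact (i) cannot hold for $b(c,x)=e^{\eta x}$: the ratio $|u(c,x)|/b(c,x)=e^{-2\eta x-\eta k\psi(c)}$ blows up as $x\to-\infty$.

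This reveals a sign typo in the lemma: the intended bounding function is $b(c,x)=e^{-\eta x}$. With that correction your scheme goes through, but (ii) changes in an important way: the bracket becomes $\xi\,\E(e^{\eta Y})+(1-\xi)$ (a downward jump in $x$ \emph{increases} $e^{-\eta x}$), and it is here---not only in the iterated moment estimate you flag at the end---that the hypothesis $\E(e^{\eta Y})<\infty$ is genuinely needed to obtain a finite $c_Q$. Condition (iii) then becomes trivial since $e^{-\eta F}\le 1$. For what it is worth, the paper's own proof carries the same inconsistency: its inequality in (i) is only valid for $b=e^{-\eta x}$, while its computations in (ii) and (iii) are written with $b=e^{\eta x}$.
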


\begin{proof}
(i) $u(c,x)=-\exp(-\eta(x+\psi(c))$ such that 
$|u(c,x)|=\exp(-\eta(x+\psi(c))\le \exp(\eta \|\psi\|_\infty)b(c,x)$.

\noindent (ii) 
\begin{align*}
\int b(c,x)Q(dc\times dx|c,x,\theta)
&=\int \exp(\eta x)Q(dc\times dx|c,x,\theta)\\
&=q(\theta)\lambda\E(\exp(\eta (x-Y)))+(M-q(\theta))\lambda\E(\exp(\eta x)))\\
&\le b(c,x) \sup_\theta \left(q(\theta)\lambda\E(\exp(-\eta Y))+(M-q(\theta))\lambda\right)\\
&\le \lambda M b(c,x)\,.
\end{align*}
\noindent (iii) Let $\zeta:=\sup_{y\in [0,m]}{a q(y)(1+y)}$. Then
\begin{align*}
b(c,x+\int_0^T\int_0^m a q(y)(1+y)\alpha_s(dy)ds)
&=\exp\left(\eta x+\eta \int_0^T\int_0^m a q(y)(1+y)\alpha_s(dy)ds\right)\\
&\le\exp(\eta T \zeta)b(c,x)\,.
\end{align*}
\end{proof}

Recall the function
$$
\mu{(z)}:=\max{\{q(\alpha)(a(1+\alpha)+z):{\alpha}\in{[0,m]}\}}
$$ 
and define the multivalued correspondence 
$$
\Gamma(z):=\{\alpha\in[0,m]:q(\alpha)(a(1+\alpha)+z)=\mu(z)\}\,.
$$
\begin{lemma}\label{th:mu_lipschitz}
\begin{enumerate}
\item $\mu$ is a convex function on $\R$.
\item $\mu$ is Lipschitz-continuous on compact sub-intervals of $\R$.
\end{enumerate}
\end{lemma}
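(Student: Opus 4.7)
The plan is to treat the two parts separately, both exploiting the fact that $\mu$ is defined as a pointwise maximum over $\alpha\in[0,m]$ of the family of functions $z\mapsto f_q(\alpha,z)=q(\alpha)(a(1+\alpha)+z)$, each of which is \emph{affine in $z$} with slope $q(\alpha)\ge 0$.

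For part (1), I would simply invoke the standard fact that the pointwise supremum (here maximum, since continuity in $\alpha$ on the compact set $[0,m]$ ensures the supremum is attained) of an arbitrary family of affine functions is convex. Concretely, for $z_1,z_2\in\R$ and $\lambda\in[0,1]$, and any $\alpha\in[0,m]$,
\begin{equation*}
q(\alpha)\bigl(a(1+\alpha)+\lambda z_1+(1-\lambda)z_2\bigr)
=\lambda f_q(\alpha,z_1)+(1-\lambda)f_q(\alpha,z_2)\le \lambda\mu(z_1)+(1-\lambda)\mu(z_2),
\end{equation*}
and taking the maximum over $\alpha$ on the left-hand side yields $\mu(\lambda z_1+(1-\lambda)z_2)\le \lambda\mu(z_1)+(1-\lambda)\mu(z_2)$.

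For part (2), the simplest route is a direct argument that in fact gives global Lipschitz continuity. Since $q(\alpha)\le M$ on $[0,m]$, for any $z_1,z_2\in\R$ and any $\alpha_1\in\Gamma(z_1)$ (which exists by the continuity/compactness argument already used in defining $\mu$) we have
\begin{equation*}
\mu(z_1)-\mu(z_2)\le f_q(\alpha_1,z_1)-f_q(\alpha_1,z_2)=q(\alpha_1)(z_1-z_2)\le M\,|z_1-z_2|,
\end{equation*}
and symmetrically $\mu(z_2)-\mu(z_1)\le M|z_1-z_2|$. Hence $\mu$ is Lipschitz on all of $\R$ with constant $M$, which is stronger than what is claimed. (Alternatively one can cite the general fact that a finite convex function on $\R$ is automatically locally Lipschitz, but the direct estimate is cleaner and yields a quantitative constant that will be useful later for the fixed-point/ODE arguments leading to Theorem~\ref{th:backward_ode}.)

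There is no real obstacle here: both statements are essentially consequences of $\mu$ being a max of affine-in-$z$ functions with uniformly bounded slopes, and the only mild care required is to note that the max over $\alpha\in[0,m]$ is attained (continuity of $f_q(\cdot,z)$ on a compact interval), so that one may legitimately pick a maximizer $\alpha_1\in\Gamma(z_1)$ in the Lipschitz estimate rather than argue via a supremum.
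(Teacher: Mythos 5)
Your proof is correct, but it takes a genuinely different and more elementary route than the paper's. The paper rewrites $\mu(z)=a\,f^*(z/a+1)$, where $f^*$ is the convex conjugate (Legendre transform) of $f(\beta)=-\beta q^{-1}(\beta)$ on $[0,M]$ (extended by $+\infty$ elsewhere), after substituting $\beta=q(\alpha)$ using the continuous inverse of $q$; it then cites Rockafellar's Theorem 12.2 for convexity of $f^*$ and Theorem 10.4 for local Lipschitz continuity of a finite convex function. You instead argue directly from the representation of $\mu$ as a pointwise maximum over $\alpha\in[0,m]$ of functions that are affine in $z$ with nonnegative, uniformly bounded slopes $q(\alpha)\in[0,M]$: convexity is the standard ``sup of affine is convex'' fact, and picking a maximizer $\alpha_1\in\Gamma(z_1)$ gives $\mu(z_1)-\mu(z_2)\le q(\alpha_1)(z_1-z_2)\le M|z_1-z_2|$, so $\mu$ is in fact \emph{globally} Lipschitz with explicit constant $M$, which is strictly stronger than the paper's stated conclusion. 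Your argument avoids introducing $q^{-1}$ and any convex-duality machinery, and the quantitative constant is a bonus for the subsequent Picard--Lindel\"of argument in Theorem~\ref{th:backward_ode}; what the paper's detour through $f^*$ buys is the structural identification of $\mu$ as a conjugate function, which can be useful if one wants, e.g., subdifferential information, but it is not needed for the lemma as stated.
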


\begin{proof}
From our assumptions on $q$ it follows that there exists a continuous inverse
$q^{-1}$ to $q$ on $[0,M]$. 
First note that 
\beast
\mu{(z)}
&=&\max{\{q(\alpha)(a(1+\alpha)+z):{\alpha}\in{[0,m]}\}}\\
&=&a \max{\{q(\alpha)\alpha+q(\alpha)(z/a+1):{\alpha}\in{[0,m]}\}}\\
&=&a \max{\{\beta q^{-1}(\beta)+\beta(z/a+1):{\beta}\in{[0,M]}\}}\\
&=&a \max{\{-f(\beta)+\beta(z/a+1):{\beta}\in{[0,M]}\}}\\
&=&a f^*(z/a+1)\,,
\eeast
where 
$$
f(\beta):=\left\{\begin{array}{cl}
-\beta q^{-1}(\beta)& \beta\in[0,M]\\
\infty&\beta\in\R\backslash[0,M]
\end{array}\right.
$$
and
$f^*$ denotes the convex conjugate of $f$, cf. §12 in \cite{rock}.
$f^*$ is a convex function on all of $\R$, see  Theorem 12.2 in \cite{rock}.
$f^*$ is real-valued on all of $\R$ since
$$
\beta \beta^*-f(\beta)=\left\{\begin{array}{cl}
\beta\beta^*+\beta q^{-1}(\beta)& \beta\in[0,M]\\
-\infty&\beta\in\R\backslash[0,M]
\end{array}\right.
$$
such that
$$
\sup_{\beta\in \R}(\beta \beta^*-f(\beta))=\max_{\beta\in[0,M]}(\beta\beta^*+\beta q^{-1}(\beta))\in\R\,.
$$
Therefore $f^*$ is Lipschitz on compact intervals,  cf. Theorem 10.4 in \cite{rock},
and so is $\mu$.
\eproof

\begin{theorem}\label{th:backward_ode}
Let $\psi$ be a continuous and bounded function on $\R$ and
let $\E(e^{\eta Y})<\infty$.
Then the backward equation (\ref{eq:beW}) for $W$ has
a unique solution.

Moreover, the solution is bounded.
\end{theorem}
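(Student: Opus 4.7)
My plan is to reformulate \eqref{eq:beW} as a fixed-point equation for an operator on a space of bounded measurable functions, and apply Banach's contraction mapping theorem iteratively on short time intervals to cover $[0,T]$. Integrating \eqref{eq:beW} in time and using the terminal condition, the equation is equivalent to $W=\mathcal{T}W$ where
$$\mathcal{T}W(c,t):=k\psi(c)+\int_t^T\!\big[M\hat W(c,s)+\mu(\bar W(c,s))\big]\,ds.$$
I look for fixed points of $\mathcal{T}$ in the sup-norm Banach space of bounded measurable functions on $[0,\infty)\times[t_0,T]$, for a suitable $t_0<T$.

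First I would establish local estimates. For $W$ with $\|W\|_\infty\le R$ the increment $\Delta W(c,y,s):=W(c+y,s)-W(c,s)$ satisfies $|\Delta W|\le 2R$, so together with $\E(e^{\eta Y})<\infty$ one has
$$|\hat W|\le\tfrac{\lambda}{\eta}(e^{2\eta R}-1),\qquad |\bar W|\le\tfrac{\lambda}{\eta}\,\E(e^{\eta Y}-1)\,e^{2\eta R}.$$
Using the Lipschitz continuity of $x\mapsto e^{-x}$ on $[-2\eta R,2\eta R]$ and the local Lipschitz continuity of $\mu$ from Lemma \ref{th:mu_lipschitz}, I obtain constants $C(R),L(R)$ for which, on the ball $B_R:=\{\|W\|_\infty\le R\}$ over the time strip $[t_0,T]$,
$$\|\mathcal{T}W\|_\infty\le|k|\|\psi\|_\infty+(T-t_0)C(R),\qquad\|\mathcal{T}W_1-\mathcal{T}W_2\|_\infty\le(T-t_0)L(R)\|W_1-W_2\|_\infty.$$
Fixing $R>|k|\|\psi\|_\infty$ and then $\delta:=T-t_0$ small enough that $\delta C(R)\le R-|k|\|\psi\|_\infty$ and $\delta L(R)<1$ turns $\mathcal{T}$ into a contractive self-map of $B_R$, and Banach's theorem produces a unique bounded fixed point on $[t_0,T]$, which is readily seen to be continuous and to solve \eqref{eq:beW}.

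To cover $[0,T]$ I would iterate backwards: once $W$ is built on $[t_0,T]$, use $W(\cdot,t_0)$ as a new bounded terminal datum and repeat on $[t_0-\delta,t_0]$, etc., with local uniqueness combining into global uniqueness. The main obstacle is that $C(R)$ and $L(R)$ grow exponentially in $R$, so if $R$ had to be enlarged at each step the required $\delta$ would shrink to zero and the iteration could halt in finite time. This is resolved by a uniform a priori sup-bound: by the same computation as in Lemma \ref{th:constant}, $\bar W^{+}(c,t):=|k|\|\psi\|_\infty+\kappa(T-t)$ with $\kappa:=\mu\!\big(-\tfrac{\lambda}{\eta}\E(e^{\eta Y}-1)\big)\ge 0$ is a constant-in-$c$ classical super-solution of \eqref{eq:beW} with terminal data $|k|\|\psi\|_\infty$, and $\underline W^{-}(c,t):=-|k|\|\psi\|_\infty$ is a sub-solution with terminal data $-|k|\|\psi\|_\infty$. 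A standard comparison argument for \eqref{eq:beW}, exploiting monotonicity of $\mu$ and of the nonlocal operators $\hat\cdot,\bar\cdot$ in the relevant differences, yields $\underline W^{-}\le W\le\bar W^{+}$ for any bounded solution, hence $|W|\le|k|\|\psi\|_\infty+\kappa T$. Fixing $R$ above this universal bound and the corresponding $\delta>0$ once and for all, finitely many iterations cover $[0,T]$, and concatenation of the local unique fixed points gives the unique bounded solution on the whole interval.
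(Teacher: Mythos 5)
Your argument is at its core the same as the paper's: the operator $\mathcal{T}$ you introduce is precisely the Picard integral operator for the $C_b(\R)$-valued ODE $w'(t)=G(w(t))$, $w(T)=k\psi$, with $G(w)=-M\hat w-\mu(\bar w)$, that the paper feeds into Picard--Lindel\"of, and your local Lipschitz estimate is the one the paper extracts from Lemma~\ref{th:mu_lipschitz} together with $\E(e^{\eta Y})<\infty$. So the short-time contraction step coincides with the paper's.

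Where you go further is in the global continuation across $[0,T]$. The paper asserts that local Lipschitz plus Picard--Lindel\"of ``gives us the unique solution'', leaving the continuation implicit; you correctly note that the constants $C(R)$ and $L(R)$ grow exponentially in $R$, so that without a uniform a priori sup-bound the admissible time-step could shrink to zero and the patched solution might not reach $t=0$. Your remedy --- comparison against the constant-in-$c$ super-solution $|k|\|\psi\|_\infty+\kappa(T-t)$ (which, by the computation of Lemma~\ref{th:constant}, is the exact solution with terminal data $|k|\|\psi\|_\infty$) and the sub-solution $-|k|\|\psi\|_\infty$ --- is the right kind of a priori estimate, and it does buy something the paper does not spell out. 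The place the proposal is still incomplete is the phrase ``a standard comparison argument'': for a non-local equation on the unbounded state space $[0,\infty)$, $\sup_c\bigl(W_1-W_2\bigr)(\cdot,t)$ need not be attained, so one must perturb before one can exploit the monotonicity of $\mu$ and of $z\mapsto-\tfrac{\lambda}{\eta}\E\bigl((\cdot)\,e^{-\eta z}\bigr)$ in the increments $W_i(c+Y,t)-W_i(c,t)$, and this deserves its own short lemma rather than an appeal to standardness. In summary: same route as the paper, you flag a genuine soft spot in its appeal to Picard--Lindel\"of, and you propose the right fix, but that fix presently rests on an unproved comparison principle.
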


\begin{proof}
Consider the Banach space $C_b(\R)$ of bounded continuous functions on $\R$. 
The backward equation (\ref{eq:beW}) is just an initial value problem 
for a $C_b(\R)$-valued function,
\beast
w'(t)&=&G(t,w(t))\\
w(T)&=&\psi\,,
\eeast
where $G(t,w)(c)=-M\hat{w}(c,t)-\mu{(\bar{w}(c,t))}$.
Using Lemma \ref{th:mu_lipschitz} it is readily shown that
$G$ satisfies a local Lipschitz condition in the second variable with
respect to the sup-norm. Therefore the classical Picard-Lindel\"of theorem
on existence and uniqueness of solutions of ODEs gives us the unique solution
$w$ to the initial value problem. Since each $w(.)$ is bounded and $w$ is 
continuous, we have that the function
$$
\begin{array}{ccccc}
[0,T]\times\R &\longrightarrow& \R\\
(c,t)&\longmapsto& w(t)(c)
\end{array}
$$
is bounded.
\eproof

\begin{lemma}\label{th:meas-choice}
There is a measurable function 
$\gamma:\R\longrightarrow[0,m]$ such that 
$$
q(\gamma(z))(a(1+\gamma(z))+z)=\mu(z)
$$ 
for all $z\in \R$.
\end{lemma}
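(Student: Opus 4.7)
The plan is to exhibit $\gamma$ as the pointwise \emph{smallest} maximizer of $\alpha\mapsto q(\alpha)(a(1+\alpha)+z)$ on $[0,m]$ and then verify measurability by inspection of its sublevel sets, avoiding a general measurable selection theorem. For each fixed $z\in\R$, the function $f_z:[0,m]\to\R$, $f_z(\alpha):=q(\alpha)(a(1+\alpha)+z)$, is continuous on the compact interval $[0,m]$, so the set of maximizers
\[
\Gamma(z):=\{\alpha\in[0,m]:f_z(\alpha)=\mu(z)\}
\]
is non-empty and closed. I therefore define $\gamma(z):=\min\Gamma(z)$, which is a well-defined element of $[0,m]$ satisfying $q(\gamma(z))(a(1+\gamma(z))+z)=\mu(z)$ by construction.

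For measurability, it suffices to show that $\{z:\gamma(z)\le c\}$ is Borel for every $c\in[0,m]$, since such sets generate the Borel $\sigma$-algebra on $[0,m]$. The key identity is
\[
\{z:\gamma(z)\le c\}=\{z:\mu_c(z)=\mu(z)\},\qquad\mu_c(z):=\max_{\alpha\in[0,c]}f_z(\alpha).
\]
Indeed, if $\gamma(z)\le c$, then $\gamma(z)\in[0,c]$ and $f_z(\gamma(z))=\mu(z)$, forcing $\mu_c(z)=\mu(z)$; conversely, if $\mu_c(z)=\mu(z)$ then $f_z$ attains $\mu(z)$ at some $\alpha_0\in[0,c]$, so $\alpha_0\in\Gamma(z)$ and thus $\gamma(z)\le\alpha_0\le c$.

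It remains to show that $\{z:\mu_c(z)=\mu(z)\}$ is closed, for which it is enough that both $\mu$ and $\mu_c$ are continuous. For any $z_1,z_2\in\R$ and any $\alpha\in[0,m]$, $|f_{z_1}(\alpha)-f_{z_2}(\alpha)|=q(\alpha)|z_1-z_2|\le M|z_1-z_2|$, and by the standard $\sup$-difference inequality this yields $|\mu(z_1)-\mu(z_2)|\le M|z_1-z_2|$ and similarly $|\mu_c(z_1)-\mu_c(z_2)|\le M|z_1-z_2|$. Hence $z\mapsto\mu(z)-\mu_c(z)$ is continuous and $\{z:\gamma(z)\le c\}$ is its preimage of $\{0\}$, which is closed. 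This proves $\gamma$ is Borel-measurable.

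The only subtle point is pinning down the sublevel-set identity above; once it is in place, the Lipschitz estimates for $\mu$ and $\mu_c$ are routine. One could alternatively invoke the Kuratowski--Ryll-Nardzewski selection theorem, since the correspondence $\Gamma$ is easily checked to be measurable with non-empty compact values, but the direct argument via the smallest maximizer is self-contained and gives slightly more information (a canonical selector).
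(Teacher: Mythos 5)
Your proof is correct, and it takes a genuinely different route from the paper. The paper proves the lemma by invoking Berge's Theorem of the Maximum to conclude that the argmax correspondence $\Gamma$ is upper hemi-continuous and hence has closed graph, and then applies von Neumann's Measurable Choice Theorem to extract a measurable selector. You instead construct a \emph{canonical} selector, the pointwise smallest maximizer $\gamma(z)=\min\Gamma(z)$ (well-defined since $\Gamma(z)$ is non-empty, closed, and contained in the compact $[0,m]$), and verify measurability by hand via the identity $\{z:\gamma(z)\le c\}=\{z:\mu_c(z)=\mu(z)\}$ and a direct Lipschitz estimate for $\mu$ and $\mu_c$ coming from the sup-difference inequality and $q\le M$. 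Your argument is more elementary and self-contained: it avoids both Berge's theorem and any general measurable selection theorem, it in fact shows that $\gamma$ is lower semi-continuous (the sublevel sets $\{\gamma\le c\}$ are closed, not merely Borel), and it produces an explicit selector rather than asserting existence. It is also worth noting that your Lipschitz bound for $\mu$ gives a simpler, more quantitative proof of the continuity claim in Lemma~\ref{th:mu_lipschitz}(2) than the paper's convex-conjugacy argument, and extends painlessly to the truncated maxima $\mu_c$. The paper's route is shorter on the page but leans on heavier cited machinery.
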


\begin{proof}
Recall the correspondence $\Gamma$
which maps $z\in \R$ to the compact set of all $\alpha\in[0,m]$ which maximize
$q(\alpha)(a(1+\alpha)+z)$. Berge's Theorem of the Maximum (\cite{berge} p.116),
states that $\Gamma$ is upper hemi-continuous, such that its graph is
closed. The graph $\cG_\Gamma$ is therefore a closed subset of $\R\times[0,m]$ 
and the projection of $\cG_\Gamma$ to $\R$ is $\R$. From von Neumann's 
Measurable Choice Theorem (see \cite{dixmier} Appendix V) it therefore follows that there exists a
measurable function $\gamma:\R\longrightarrow[0,m]$ such that 
$q(\gamma(z))(a(1+\gamma(z))+z)=\mu(z)$ for all $z\in \R$.
\eproof

Theorem \ref{th:backward_ode}
and Lemma \ref{th:meas-choice} together show that the assumptions of the 
Verification Theorem \ref{th:verification} are satisfied.

\section{Certainty equivalence price and limit prices}\label{sec:limit-prices}

In the preceding section we computed the utility indifference
price. Having done this we might ask whether there can actually
be a trade, that is whether it may occur that $p(c,t,k)=p^b(c,t,k)\ge p^s(c,t,k)=-p(c,t,-k)$.
This does not seem to be the case too often. All of our numerical
examples show that $p^b(k)<p^s(k)$. This is in line with, e.g., the findings 
by Takino \cite{tak}, who computes indifference prices of European claims in 
a stochastic volatility model with partial information.

\subsection{Certainty equivalence price}

Another interesting question is what an investor is willing to charge
for the derivative if she cannot hedge the derivative. 
The hedging strategy we proposed
earlier can only be realized by an insurance company.
It is not unreasonable to assume that the counterpart in such a deal is {\em not}
an insurance company. In that case the utility indifference price
of the buyer coincides with the certainty equivalence price. 

It is also reasonably to assume that on the seller's side the derivative
is split up in the way mentioned above, i.e. each seller sells only the 
$N$-th part of the whole
derivative. The example we have in mind is that of a CAT-bond which is
 denominated into $N$ units.

Let us denote the certainty equivalence price for the seller by
$\pi^s(c,t,k)$, i.e. the solution to
$$
\E\big(-\exp(-\beta y_0)|C_t=c\big)=\E\Big(-\exp\big(-\beta(y_0+\pi^s(c,t,k)-k\psi(C_T))\big)|C_t=c\Big)\,,
$$
where we have assumed exponential utility with coefficient of 
risk aversion $\beta$.
That is
$$
\pi^s(c,t,k)=\frac{1}{\beta}\log\big(\E(\exp(\beta k\psi(C_T))|C_t=c)\big)\,.
$$
$\pi^s$ may be computed either directly using the distribution of $C_T$
or using a backward equation which can be derived similarly to the 
backward equation for $W$:
\be\label{eq:equivalence}
\begin{array}{rcl}
\pi^s_t(c,t,k)+M\frac{\lambda}{\beta} \E\left(e^{\beta\pi^s(c+Y,t,k)}-e^{\beta\pi^s(c,t,k)}\right) &=&0\\
\pi^s(c,T,k)&=&k\psi(c)\,.
\end{array}
\ee

Therefore the derivative can only be sold in denomination $N$, between
$N$ sellers without the opportunity to hedge and a buyer with the opportunity
to hedge, if
$$
p^b(c,t,1)\ge N \pi^s(c,t,1/N)
=\frac{N}{\beta}\log(\E(\exp(\frac{\beta}{N}\psi(C_T))|C_t=c))\,
$$
for some $N$.
For bounded $\psi$ and $N\rightarrow \infty$ we have 
$\exp(\frac{\beta}{N}\psi(C_T))\approx 1+\frac{\beta}{N}\psi(C_T)$ 
and therefore 
$\E(\exp(\frac{\beta}{N}\psi(C_T))|C_t=c)\approx 1+\frac{\beta}{N}\E(\psi(C_T)|C_t=c)$
and further $\frac{N}{\beta}\log\E(\exp(\frac{\beta}{N}\psi(C_T))|C_t=c)\approx\E(\psi(C_T)|C_t=c)$.
That is 
$$
\lim_{N\rightarrow \infty}N \pi^s(c,t,1/N) =\E(\psi(C_T)|C_t=c)\,.
$$
Define
$$
\pi^0(c,t):=\E(\psi(C_T)|C_t=c)\,,
$$
then $v$ can be computed directly or via the backward equation
\be\label{eq:risk-neutral}
\begin{array}{rcl}
v_t(c,t)+M\lambda \E(\pi^0(c+Y,t)-\pi^0(c,t)) &=&0\\
\pi^0(c,T)&=&\psi(c)\,,
\end{array}
\ee
and gives the limit of the buyer's price for $ N\rightarrow\infty$.

In our numerical example  we have 
$$
p^b(c,t,1)> \lim_{N\rightarrow\infty}N\pi^s(c,t,\frac{1}{N})=\pi^0(c,t)\,,
$$
that is a deal could be stricken for sufficiently large $N$, {\em provided that
the sellers of the derivative are not able to hedge the derivative. }
The corresponding difference between utility indifference buyer's price
and $\pi^0$, the denomination limit of $\pi^s$, is shown in Figure \ref{fg:diff}.

\begin{figure}
\begin{center}
\ifpdf
\input{gfx_uibp-rnsp.pdftex_t}
\else
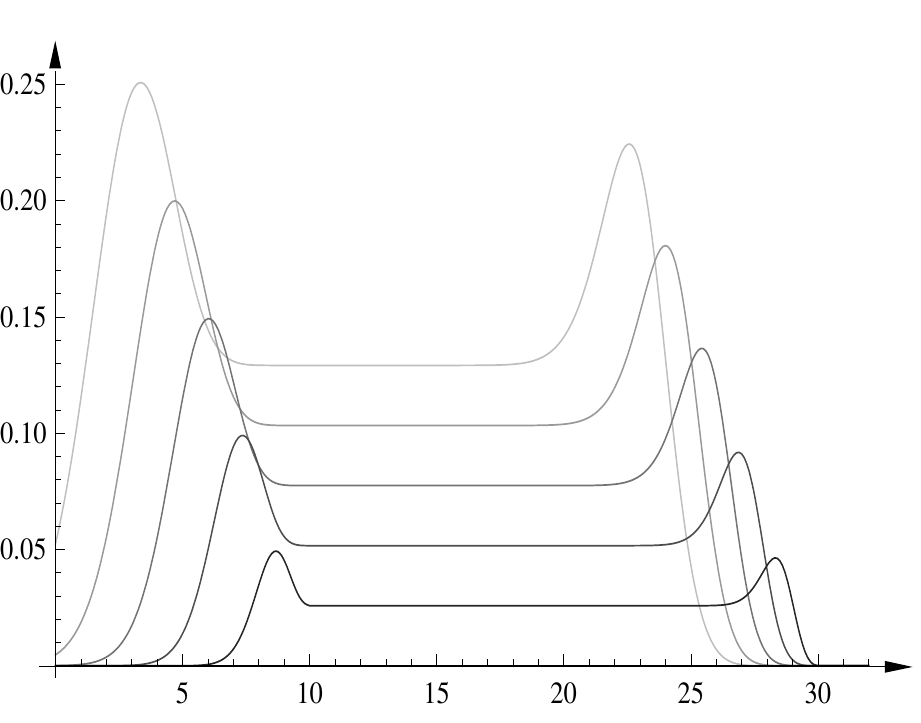
\fi
\end{center}
\captionof{figure}{Here we plot $p^b-\pi^0$ as a function of $c$. 
Lighter shades of gray correspond to earlier times. Note that the difference is 
always non-negative, which implies that the derivative can actually be 
traded.}\label{fg:diff}
\end{figure}

\subsection{Risk-neutral limit}

Another interesting quantity is the risk-neutral limit of the indifference
price, that is 
$$
p_0(c,t,k):=\lim_{\eta\rightarrow 0} p_\eta(c,t,k)\,,
$$
where $p_\eta$ is the utility indifference price corresponding to 
risk aversion $\eta$. The risk-neutral limit price has been considered,
for example, in \cite{RoEK,DGRSSS,Bec,becherer}, and it is of some interest 
in that
it gives a linear pricing rule which nevertheless is related to the 
non-linear utility indifference pricing rule.\\

It is not hard to see that our optimization problem  
\eqref{eq:value_f} is meaningful even for $\eta=0$, that is if one
takes linear utility $u(x)=x$. It turns out
that  $p_0(c,t,k)=k \pi^0(c,t)$ where $\pi^0$ is the same as in the preceding 
subsection.

\section{Conclusion and open questions}\label{sec:conclusions}

In our study,  we have modeled the industrial loss index  by a compound Poisson
process and  showed that the insurer can control her wealth process by
adjusting her portfolio via choosing the risk loading.  Our study contributes
to the insurance theory by showing that by issuing CAT bonds and offering
catastrophe coverage  the net expected income of the insurance company 
remains the same while the insurer can lower the premium charge.

 The study has a greater
significance to low-income countries where natural disasters often exceeds the
resources of internal and external sources of relief funding: Using our
strategy, the insurance company in the low-income country can sell CAT bond to
some (ethical) investors and offer affordable insurance services against
risk of low-probability, high-loss  events to the needy/poor vulnerable
customers.

We have discussed to role of the ability to hedge. We have found that in the
natural situation where the seller (or the sellers) of the derivative is
not an insurance company and therefore cannot hedge the derivative via
her portfolio, then the derivative can actually be bought. \\
 
For future research, the following extensions or generalizations of the problem are of interest: 
\begin{itemize}
\item $N$ could be a doubly stochastic process, such that its intensity varies
over time. This would allow for a more realistic modeling of catastrophic
events: 
One could have ``normal'' times, where claims arrive at a low rate and
``catastrophe'' times, where claims arrive at a very high rate. One would 
then probably restrict the policies of the insurance company in a way that does
not permit changing the risk loading during catastrophe times.  
\item Alternatively, one could model catastrophes as events where several
claims happen at the same time and where the insurance get to pay a random
number of claims according to their fraction of the total portfolio. 
\item It would be interesting to allow for some lag for the adjustment of the
demand to a changed risk loading.
\item The $Y_k$'s and $U_k$'s could be made dependent. More specifically,
the parameters of $Y_k$ could be a function of $U_k$. This is reasonable, 
since the different clients are likely to have different claim distributions.
In that setup, the policies of the insurance company would be more complicated
objects, the risk loading would also depend on the parameters of the claim
distributions.
\end{itemize}



\end{document}